\newtheorem{theorem}{Theorem}
\theoremstyle{definition}
\newtheorem{definition}{Definition}
\begin{document}
\title{BPFISH: Blockchain and Privacy-preserving FL Inspired Smart Healthcare}

\author{Moirangthem~Biken~Singh, and
	Ajay~Pratap,~\IEEEmembership{Member,~IEEE}
	\IEEEcompsocitemizethanks{\IEEEcompsocthanksitem M. B. Singh and A. Pratap are with the Department of Computer Science and Engineering, Indian Institute of Technology (Banaras Hindu University) Varanasi 221005 India. E-mail: \{moirangthembsingh.rs.cse21, ajay.cse\}@iitbhu.ac.in.\protect }
}

\IEEEtitleabstractindextext{
\begin{abstract}
This paper proposes Federated Learning (FL) based smart healthcare system where Medical Centers (MCs) train the local model using the data collected from patients and send the model weights to the miners in a blockchain-based robust framework without sharing raw data, keeping privacy preservation into deliberation. We formulate an optimization problem by maximizing the utility and minimizing the loss function considering energy consumption and FL process delay of MCs for learning effective models on distributed healthcare data underlying a blockchain-based framework. We propose a solution in two stages- first, offer a stable matching-based association algorithm to maximize the utility of both miners and MCs and then solve loss minimization using Stochastic Gradient Descent (SGD) algorithm employing FL under Differential Privacy (DP) and blockchain technology. Moreover, we incorporate blockchain technology to provide tempered resistant and decentralized model weight sharing in the proposed FL-based framework. The effectiveness of the proposed model is shown through simulation on real-world healthcare data comparing other state-of-the-art techniques.

\end{abstract}

\begin{IEEEkeywords}
Federated Learning, Blockchain, Stable Matching, Differential Privacy, Smart Healthcare.
\end{IEEEkeywords}
}

\maketitle

\IEEEdisplaynontitleabstractindextext

\IEEEpeerreviewmaketitle
   

\section{Introduction}\label{sec:introduction}

Smart healthcare system is likely to play a crucial role in our society. It can provide remote medical services, helps medical diagnosis and protect patients against dangerous infectious disease such as COVID-19 during personal visits to hospitals.
Smart healthcare solutions are extremely helpful during the recent COVID-19 pandemic \cite{singh2020internet}, \cite{jnr2020use}.
However, there are still challenges in smart healthcare such as data unavailability due to privacy concerns of patients \cite{10.1145/3298981}.

Federated Learning (FL) \cite{mcmahan2017communication}, an emerging framework is used to address the unavailability and privacy risks of sensitive healthcare data \cite{KimberlyPowellBlog, 8931716, li2020federated}. Since training data is not leaving the Medical Centres (MCs), this framework ensures privacy protection for all involved centres by sharing models instead of sharing sensitive raw healthcare data. However, there are still challenges in FL based healthcare architecture such as privacy leakage and single point of failure \cite{10.1145/3298981, 8733825}. 
Firstly, healthcare data is highly privacy-sensitive, thus the leakage of this sensitive information could destroy the reputation and finances of the patient. 
Secondly, the existing FL based healthcare architectures are suffering from a single point of failure risks due to the aggregation of the model in a central server and the unwillingness of MCs to participate in the FL process due to the lack of incentives.

To overcome the above-mentioned issues, we consider a blockchain and FL based smart healthcare framework (see Fig. \ref{application_model}) in which MCs train a model locally using the data collected from patients and forward the model weights to a miner in a blockchain to build a robust model without sharing raw data. 
However, it is necessary to optimize both the utilities of miners as well as MCs and the FL loss function simultaneously to increase the accuracy and effectiveness of smart healthcare system while keeping privacy risks into consideration. 
Therefore, we propose a joint optimization problem by maximizing the utility and minimizing the loss function together, considering energy consumption and the model training delay at MCs for learning effective models. However, in blockchain-based FL, there is a need for a proper association between miners and MCs in order to increase accuracy and effectiveness of smart healthcare system.
Thus, we offer a stable association algorithm between miners and MCs to maximize the utility of both miners as well as MCs in polynomial time and then solve the FL loss minimization using Stochastic Gradient Descent (SGD) algorithm employing FL under Differential Privacy (DP) and blockchain technology.

In this paper, we focus on developing a blockchain-based privacy-preserving FL framework for collaborative training across multiple MCs under differential privacy to solve privacy leakage problems in Healthcare Domain (HD). 
An adversary can use an inference attack to recreate the training data from the shared model.
Therefore, we incorporate DP \cite{10.1007/11681878_14} to prevent privacy leakage while transmitting model weights during FL process. 
DP is a privacy protection technique that has been widely used in the field of privacy protection in deep learning models \cite{10.1007/11681878_14, Abadi_2016, wang2019collecting}. 
We also use blockchain as a way for decentralized architecture in model weights sharing. Blockchain provides a tempered resistant framework where each device verifies the transactions in the network and the miners perform Proof-of-work (PoW) to add new blocks. PoW is a proof of doing computation by the miner to add a new block to the blockchain. 
Therefore, we formulate a blockchain and FL based privacy-preserving for smart healthcare framework by maximizing the utility and minimizing the loss function considering energy consumption and latency of MCs for learning effective models. 
Furthermore, to improve the utility and accuracy of the FL model, we offer a stable and computationally efficient association algorithm among miners and MCs. 
We also proposed an incentive mechanism to promote the MCs for participation in the FL process.
Specifically, the contributions of this paper are summarized as follows:
\begin{itemize}

    \item Formulate an optimization problem by maximizing the utility and minimizing the loss function considering energy consumption and FL process delay of MCs for collaborative learning on distributed healthcare data. 
    
    \item Stable Miner-MC Association (MMA) algorithm is proposed between miners and MCs to maximize the utility with computational complexity of $O(N^2S)$, where $N$ and $S$ represent the number of MCs and the number of miners, respectively. 
    
    \item Blockchain-based privacy-preserving FL framework that guarantees DP for decentralized collaborative learning from data stored across MCs is proposed, with communication complexity of $O(T|\mathbb{K}|^{2}|w|)$, where $T$, $|\mathbb{K}|$ and $|w|$ represent the number of global iteration, the number of participated MCs in the FL process and the number of model weights, respectively.
    An Incentive Mechanism (IM) is also proposed to encourage miners for verifying and adding the blocks, and MCs for participating as proportional to their data sample size.
    
    \item Through rigorous evaluations on Chest X-ray Images (Pneumonia) dataset we verify the effectiveness of the proposed model on various privacy parameters. Performance  study  demonstrates that proposed  BPFISH framework  surpass state-of-the-art  schemes, achieving 11.18\% better outcome on an average.

\end{itemize}

The rest of the paper is organized as follows: The relevant work is reviewed in Section \ref{RelWork}. The system model and the problem formulation are discussed in Section \ref{system}.
Miner-MC association and FL process are given in Section \ref{matching} and Section \ref{FL}, respectively. Experiment results and analysis are given in Section \ref{result}. Finally, Section \ref{conc} provides the conclusion.

\section{Related Work} \label{RelWork}

In this section, we present the related works for FL with DP, blockchain, association and their combination in HD along with a comparative analysis in Table \ref{comparisontable}.

\subsection{FL with DP in Smart Healthcare}

Recently, NVIDIA introduced Clara FL \cite{KimberlyPowellBlog} for distributed, collaborative AI model training across multiple hospitals to develop a more accurate global model while maintaining patient privacy.
In \cite{tuli2020healthfog} and \cite{chen2020fedhealth}, authors developed a FL model for IoT based smart healthcare system.
Silva et al. \cite{silva2019federated} also proposed an FL framework for securely accessing and analysing medical data stored at several institutions. These works consider central server for model aggregation in FL process but this causes inaccurate global model update if the server is malfunctioned or attacked. 
The adaptation of privacy-preserving techniques to conserve patient privacy with the use of FL has been demonstrated in clinical and epidemiological research \cite{sadilek2021privacy}. 
In \cite{9492000} and \cite{LI2020101765}, FL based privacy-preserving method has been used for smart healthcare systems such as detection of Alzheimer's disease and multi-site fMRI data analysis. 
In \cite{9069945}, K. Wei et al. proposed a framework under DP by adding Gaussian noises to the locally trained weights before sending it to the server. 
However, these works mainly focused on optimizing FL under DP by considering centralized model aggregation that is vulnerable to the single point of failure and did not consider incentives for participating in the FL process.

\begin{table}[t]
     \begin{center}
         \caption{Summary of existing works} \label{comparisontable}
        
         \begin{tabular}{| c | c | c | c | c | c | c |}
         \hline
         \textbf{{Problem Focus}} &
        \textbf{{FL}} &
        \textbf{{DP}} & \begin{tabular}{@{}c@{}}\textbf{Block}\\ \textbf{-chain}\end{tabular} & \begin{tabular}{@{}c@{}}\textbf{Associa}\\ \textbf{-tion}\end{tabular} &
        \textbf{{IM}} &
        \textbf{{HD}} \\
        \hline
        
            \begin{tabular}{@{}c@{}}\textbf{Collaborative}\\ \textbf{AI Model \cite{KimberlyPowellBlog}}\end{tabular} & {\checkmark} & {$\times$} & {$\times$} & {$\times$} & {$\times$} & {\checkmark} \\ 
            \hline
            \begin{tabular}{@{}c@{}}\textbf{Smart Health-}\\\textbf{care \cite{tuli2020healthfog, chen2020fedhealth}}\end{tabular} & {\checkmark} & {$\times$} & {$\times$} & {$\times$} & {$\times$} & {\checkmark} \\ 
            \hline
            \begin{tabular}{@{}c@{}}\textbf{Meta-Analysis}\\\textbf{of Brain}\\\textbf{Data \cite{silva2019federated}}\end{tabular} & {\checkmark} & {$\times$} & {$\times$} & {$\times$} & {$\times$} & {\checkmark} \\ 
            \hline
            \begin{tabular}{@{}c@{}}\textbf{Health}\\ \textbf{Research \cite{sadilek2021privacy}}\end{tabular} & {\checkmark} & {\checkmark} & {$\times$} & {$\times$} & {$\times$} & {\checkmark} \\ 
            \hline
            \begin{tabular}{@{}c@{}}\textbf{Privacy}\\ \textbf{Preserving \cite{9492000}}\end{tabular} & {\checkmark} & {\checkmark} & {$\times$} & {$\times$} & {$\times$} & {\checkmark} \\ 
            \hline
            \begin{tabular}{@{}c@{}}\textbf{Multi-site}\\ \textbf{fMRI}\\ \textbf{Analysis \cite{LI2020101765}} \end{tabular} & {\checkmark} & {\checkmark} & {$\times$} & {$\times$} & {$\times$} & {\checkmark} \\ 
            \hline
            
            \begin{tabular}{@{}c@{}}\textbf{Prevent}\\ \textbf{Information}\\ \textbf{Leakage \cite{9069945}} \end{tabular} & {\checkmark} & {\checkmark} & {$\times$} & {$\times$} & {$\times$} & {$\times$} \\ 
            \hline
            
            \begin{tabular}{@{}c@{}}\textbf{Access}\\ \textbf{Management \cite{novo2018blockchain}} \end{tabular} & {$\times$} & {$\times$} & {\checkmark} & {$\times$} & {$\times$} & {\checkmark} \\ 
            \hline
            \begin{tabular}{@{}c@{}}\textbf{EMRs Access}\\ \textbf{Control \cite{zhang2018blockchain,dagher2018ancile}} \end{tabular} & {$\times$} & {$\times$} & {\checkmark} & {$\times$} & {$\times$} & {\checkmark} \\ 
            \hline
            \begin{tabular}{@{}c@{}}\textbf{EMRs}\\ \textbf{Management \cite{9261429}} \end{tabular} & {$\times$} & {$\times$} & {\checkmark} & {$\times$} & {$\times$} & {\checkmark} \\ 
            \hline
            
            \begin{tabular}{@{}c@{}}\textbf{Resource off-}\\ \textbf{loading \cite{9112693, lim2021towards}} \end{tabular} & {$\times$} & {$\times$} & {\checkmark} & {\checkmark} & {$\times$} & {$\times$} \\ 
            \hline
            \begin{tabular}{@{}c@{}}\textbf{Blockchain}\\ \textbf{based Crowd-}\\ \textbf{sourcing \cite{KADADHA2021103155, 8726129}} \end{tabular} & {$\times$} & {$\times$} & {\checkmark} & {\checkmark} & {$\times$} & {$\times$} \\ 
            \hline
            \begin{tabular}{@{}c@{}}\textbf{Patient-}\\ \textbf{Physician}\\ \textbf{Matching \cite{9176324}} \end{tabular} & {$\times$} & {$\times$} & {\checkmark} & {\checkmark} & {$\times$} & {$\times$} \\ 
            \hline
            
            \begin{tabular}{@{}c@{}}\textbf{Privacy}\\ \textbf{Preserving}\\ \textbf{FL \cite{lu2019blockchain}} \end{tabular} & {\checkmark} & {\checkmark} & {\checkmark} & {$\times$} & {$\times$} & {$\times$} \\ 
            \hline
            \begin{tabular}{@{}c@{}}\textbf{Blockchain}\\ \textbf{for Privacy}\\ \textbf{Preserving \cite{8874972}} \end{tabular} & {$\times$} & {\checkmark} & {\checkmark} & {$\times$} & {$\times$} & {$\times$} \\ 
            \hline
            \begin{tabular}{@{}c@{}}\textbf{Privacy}\\ \textbf{Preserving}\\ \textbf{IIoT \cite{jia2021blockchain}} \end{tabular} & {\checkmark} & {\checkmark} & {\checkmark} & {$\times$} & {$\times$} & {$\times$} \\ 
            \hline
            \begin{tabular}{@{}c@{}}\textbf{Blockchained}\\ \textbf{On-Device}\\ \textbf{FL \cite{8733825}} \end{tabular} & {\checkmark} & {$\times$} & {\checkmark} & {$\times$} & {\checkmark} & {$\times$} \\ 
            \hline
            \begin{tabular}{@{}c@{}}\textbf{Decentralized}\\ \textbf{aggregator}\\ \textbf{free FL \cite{ramanan2020baffle}} \end{tabular} & {\checkmark} & {$\times$} & {\checkmark} & {$\times$} & {\checkmark} & {$\times$} \\ 
            \hline
            \begin{tabular}{@{}c@{}}\textbf{BPFISH}\\ \textbf{(Proposed)} \end{tabular} & {\checkmark} & {\checkmark}  & {\checkmark} & {\checkmark} & {\checkmark} & {\checkmark}\\ 
            \hline
            
         \end{tabular}
     \end{center}
\end{table}

\subsection{Blockchain in Smart Healthcare}

In recent years, many research have proven that blockchain is a promising solution for ensuring the confidentiality, privacy preserving and distributed sharing of sensitive health information.
Oscar Novo \cite{novo2018blockchain} focused on distributed access management based on blockchain in IoT. However, the work does not consider the privacy of Electronic Medical Records (EMRs) of the patients. 
In \cite{dagher2018ancile, zhang2018blockchain}, blockchain-based EMRs access control model provided different access levels to different types of users, which had been controlled by the hospitals in traditional EMRs access control system.
Li Chaoyang et al. \cite{9261429} proposed healthchain, a peer-to-peer EMRs management and trading system based on a consortium blockchain. Healthchain system allows the access of patient EMRs in different institutions, and the EMRs can be traded between different patients.
These works mainly focused on healthcare record management but did not consider FL optimization considering privacy preservation.

\subsection{Association in Smart Healthcare} Many studies had been done for user association in resource offloading considering UAVs and IoTs enabled edge computing framework \cite{lim2021towards, 9112693}.
Smart user matching intersection with blockchain for stable matching in ultra-dense wireless networks had been proposed for computation offloading in \cite{9112693}. 
M. Kadadha et al. \cite{KADADHA2021103155} and J. An et al. \cite{8726129} explored Gale-Shapley based matching algorithm for node selection in blockchain-based crowdsourcing model.
R. Chen et al. \cite{9176324} presented a matching algorithm by considering the preferences from both patients and physicians to reduce the waiting time of patient.
However, the above stated works cannot be directly applied to the proposed framework due to ill posed nature of miners and MCs in blockchain and FL based healthcare architecture.

\subsection{FL, DP and Blockchain for Smart Healthcare}

C. Li et al. \cite{lu2019blockchain} and K. Gai et al. \cite{8874972} proposed privacy-preserving data sharing architecture based on blockchain for industrial Internet of Things by integrating FL and blockchain technology.
In \cite{jia2021blockchain}, privacy protection schemes for data in blockchain have been applied using data perturbation techniques like DP. However, these works do not consider incentives and the privacy preserving technique in smart healthcare domain.
In \cite{8733825} and \cite{ramanan2020baffle}, blockchain-based FL for privacy-preserving had been applied to eliminate the centralized global model aggregation.

The above existing works mainly focused on improving the performance of learning algorithms in FL and they did not provide incentives to the MCs participated in the FL process. However, it is necessary to provide incentive since participating institutes in FL requires data collection and model training on the device itself. 
Users are unlikely to participate in FL activities unless they are rewarded because model training consumes energy and requires a constant network connection. 
Moreover, none of the existing approaches has considered privacy-preserving decentralized FL with association and incentive mechanisms altogether as shown in Table \ref{comparisontable}. In this paper, we proposed a blockchain-based privacy-preserving FL combined with incentive and association mechanisms for smart healthcare applications.

\begin{figure}
    \centering
    \includegraphics[width=1\linewidth,height=7cm]{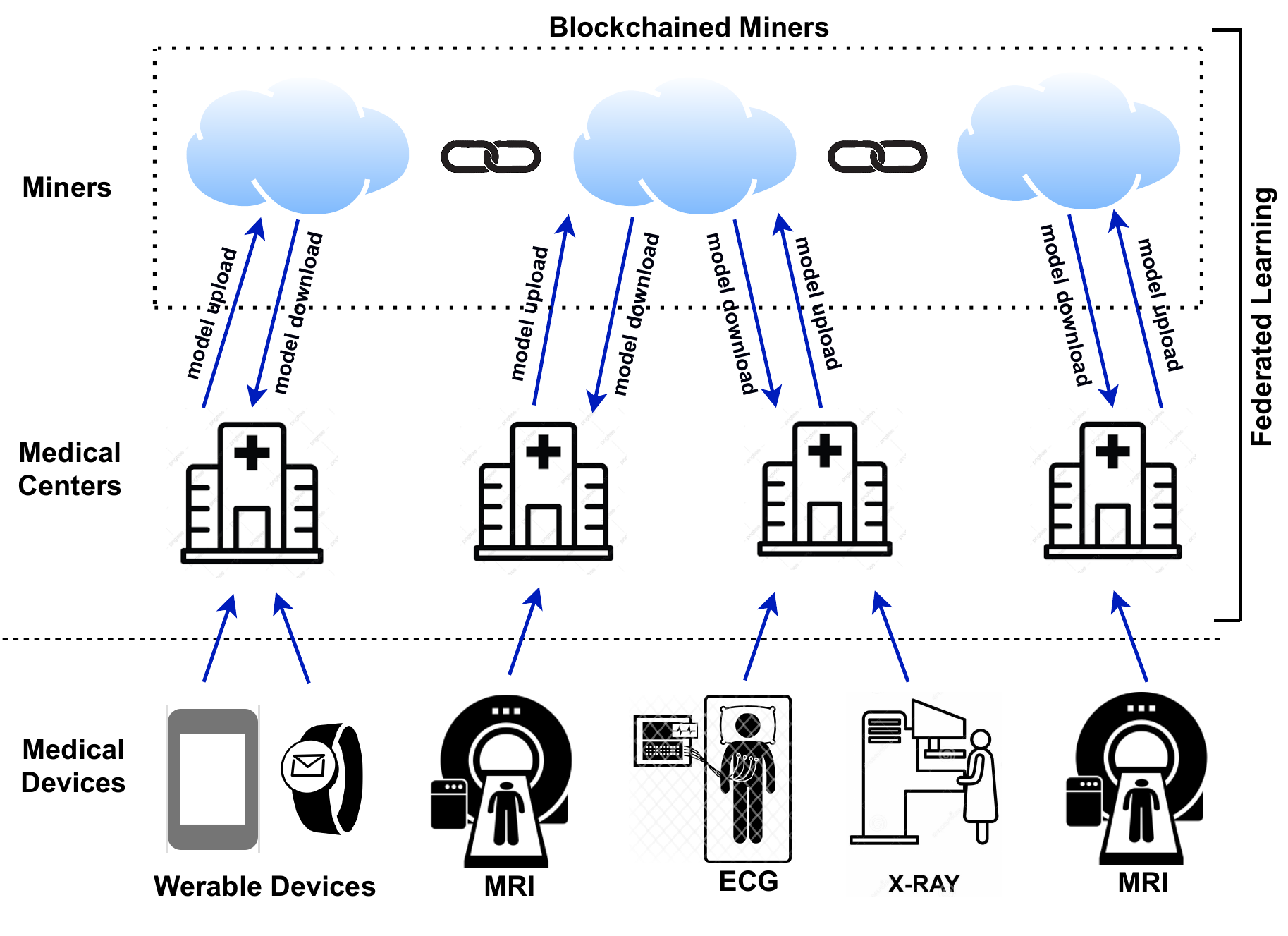}
    \caption{\label{application_model}System model of the proposed framework.}
\end{figure}

\section{System Model and Problem Formulation} \label{system}

As shown in Fig. \ref{application_model}, we consider a smart healthcare scenario consisting of $N$ MCs and $S$ miners represented by $\mathbb{C} = \{C_1, \cdots , C_n, \cdots,  C_{N}\}$ and $\mathbb{M} = \{M_1,\cdots, M_s, \cdots M_{S}\}$, respectively. Furthermore, let $X_n=\{x_n^1, \cdots, x_n^d, \cdots, x_n^{D_n}\}$ be the set of $D_n$ data samples available at MC $C_n \in \mathbb{C}$, collected from different patients admitted in the MC. MCs collaboratively train a shared model using their own data without sharing it to a central server. Specifically, each MC trains a model locally using their data and then send it to the miners for distributed aggregation \cite{jia2021blockchain}, resulting in a global model. 
An MC is characterized by a tuple $<$$X_{n}, f_{n}$$>$, where $X_{n}$ and $f_{n}$ represent the available amount of data samples for local model training and, the available number of CPU cycles at MC $C_n \in \mathbb{C}$, respectively. The above-stated scenario needs two-stage categorization as described in the following:

  \textbf{Stage 1: MMA:} In this stage, each MC finds an association with a miner. The association is formed by considering the utilities of both the MCs and miners using deferred acceptance algorithm (discussed in Section \ref{matching}).
  
  \textbf{Stage 2: Blockchain-based privacy-preserving FL:} MCs train local model and send model weights to the associated miner without sharing the healthcare data. Thus, hereby preserving the privacy of the patient's healthcare data. Miners add the models to the blockchain network. Blockchain provides distributed model weights sharing framework that is robust to the single point of failure (discussed in Section \ref{FL}).

To set up the association between MCs and miners (Stage 1), there is a need for preference order of one over the other, depending on maximum utilities; described in the following.

\subsection{Utility of Miner} \label{Utility of miners}
The blockchain network provides mining rewards for verification and adding blocks to the blockchain.
When a miner $M_s$ adds a block, its mining reward is provided from the blockchain network, as does in the traditional blockchain network \cite{article}. Each miner adds the local models from MCs to the blockchain. However, miner $M_s$ adds local model from $C_n$ as a block to the blockchain if and only if $M_s$ and $C_n$ are associated with each other. Specifically, the association between miner and MC is given as follows: 
\begin{equation}
  y_{n,s} = \begin{cases}
                1, & \text{if MC $C_n$ gets associated with miner $M_s$},\\
                0, & \text{otherwise}.
            \end{cases}
            \label{yns}
\end{equation}

Let $R_s$ be the revenue of the miner $M_s$ in the FL process and $\mathcal R$ be the mining reward for adding a block to the blockchain \cite{article}, which is the same for all the miners in the network. Each miner solves the PoW multiple times i.e., in every iteration. Therefore, the total revenue for a miner $M_s$, is represented as follows:
\begin{equation} \label{rev}
        R_{s} = T \cdot h(\mathcal R),
\end{equation}
where $h(\cdot)$ is a monotonically increasing function for $\mathcal R$ and $T$ is the number of global iterations. For simplicity, we consider following function to define $h(\mathcal R)$:
\begin{equation} \label{umi}
        h(\mathcal R) = \mathcal R \sum_{C_n \in \mathbb{C}} y_{n,s},
\end{equation}
where $\sum_{C_n \in \mathbb{C}} y_{n,s}$ is the number of blocks added to the blockchain by the miner $M_s$ in each iteration which is equivalent to the number of MCs associated with the miner $M_s$. Therefore, the total revenue of a miner can be rewritten as follows:
\begin{equation} \label{frev}
        R_{s} = T \mathcal R \sum_{C_n \in \mathbb{C}} y_{n,s}.
\end{equation}

The reward is an essential component of an FL based framework for the MC to participate in the FL process. The reward of MC is related to the number of data samples available for training at MC, $C_n$. We consider that the reward is linearly dependent on the number of data samples. Therefore, the reward offer to MC, $C_n$ from miner $M_s$ for training local model on data sample $D_n$, is calculated as follows \cite{8733825}:
\begin{equation} \label{reward}
     R_{n,s} = R_s \frac{D_n}{\sum_{C_n \in \mathbb{C}} D_n}.
\end{equation}

We assume that the utility and the total revenue of a miner are linearly dependent. Therefore, the utility of a miner is defined as follows:
\begin{equation} \label{fumi}
        U_{n,s}^{Min} = R_s - R_{n,s}.
\end{equation}

\begin{table}[t]
    \centering
        \caption{Symbol description}
        \begin{tabular}{|l|c|}
            \hline
            \textbf{Symbol} & \textbf{Description}\\
            \hline
            \hline
            $\mathbb{C}$ & Set of MCs\\
            N & Number of MCs \\
            $\mathbb{M}$ & Set of Miners \\
            $\textbf{w}^{(n)}$ & Model weights at MC $C_n$\\
            $X_n$ & Dataset at MC $C_n$\\
            $D_n$ & Number of data samples at $C_n$\\
            $l(\textbf{w}^{(n)}, x_n^d)$ & Loss per data sample\\
            $\alpha, B$ & Learning Rate and Batch Size\\
            T & Number of global iteration\\
            $I_n$ & Number of local iteration of MC, $C_n$ \\
            $G, G'$ & Computed Gradient and Clipped Gradient\\
            $G''$ & Gradient after adding noise\\
            $A$ & Gradient bound\\
            $\mathcal{N}, \sigma$ & Gaussian Noise and Noise scale\\
            $\epsilon, \delta$ & Privacy parameters\\
            $\textbf{w}^{(n,t)}$ & Weights at MC, $C_n$ at global iteration $t$\\
            $\textbf{W}^{(t)}$ & Aggregate Weights at global iteration $t$\\
            $\mathbb{K}$ & Set of MCs associated with miners\\
            $|\mathbb{K}|$ & Number of MCs participated in FL process\\
            $y_{n,s}$ & Association between MC and miner\\
            $\mathcal{R}, R_{n,s}$ & Mining reward and miner revenue\\
            $\varphi$ & Cost per unit energy\\
            $U_{n,s}^{Min}, U_{n,s}^{MC}$ & Utility of miner and MC\\
            $f_n, \beta_n$ & CPU cycles/second, CPU cycles/sample\\
            $I_n$ & Number of local iteration\\
            $\mu_n$ & Transmit power\\
            $E_{n,s}^{Trans}$ & Energy consumption in model upload \\
            $E_{n}^{Comp}$ & Energy consumption in model computation\\
            $\tau_{n,s}^{Trans}$ & Model computation time at MC\\
            $\tau_{n}^{Comp}$ & Transmission time from MC to miner\\
            $\tau^{th}$ & Threshold\\
            $\Upsilon_{n}$ & Utility list of $C_n$ in increasing order\\
            $\Delta_s$ & Utility list of $M_s$ in increasing order\\
            $\rho, \eta$ & Weights assigned to utility and FL loss\\
            $\mathbb{A}_s, \mathbb{B}_s$ & Feasible association candidates of $M_s$ and $C_n$ \\
            $P_n$ & Preference list of MC $C_n$\\
            $P_s$ & Preference list of miner  $M_s$\\
            
            \hline
        \end{tabular}
    \label{symboltable}
\end{table}

\subsection{Utility of Medical Center} \label{Utimc}

We use the utility as a criterion to determine the preferences of the MCs and miners for the association. Since MC trains model in the FL process with the consumption of energy, we consider computation energy while defining the utility of an MC.
Let $f_{n}$ be the number of CPU cycles per second (computation capacity) of MC $C_n \in \mathbb{C}$. Let $\beta_n$ (cycles/sample) be the number of CPU cycles needed for computing one sample data $x_n^d \in X_n$ at MC, $C_n$.
The energy consumption to calculate the total number of $\beta_n D_{n}$ CPU cycles at MC, $C_n$ can be derived as \cite{9264742}:
\begin{equation} \label{ec1}
        E_{n} = \kappa \beta_n D_{n} f_{n}^2,
\end{equation}
where $\kappa$ is a coefficient that depends on the chip architecture. In the FL process, an MC, $C_n$ requires to compute $\beta_n D_{n}$ CPU cycles in each $I_{n}$ local iterations. Therefore the total energy consumption in local model training is given by:
\begin{equation} \label{ec}
        E_{n}^{comp} = I_{n} E_{n} = \kappa I_{n} \beta_n D_{n} f_{n}^2.
\end{equation}
Moreover, the computation time taken by an MC, $C_n$ to train local model can be defined as follows: $\tau_{n}^{comp} = I_{n} \beta_n D_n / f_{n}$.

Following the local model training, model is sent from the MC to the associated miner. For simplicity, we define the achievable data rate between MC, $C_n$  and miner $M_s$ as: $\upsilon_{n,s} = \displaystyle  Q V \log_2\left(1 + SINR_{n,s} \right)$. Here, $V$ is the total number of allocated PRBs\footnote{PRB is the smallest unit that can be assigned to a device in the 5G network \cite{9371426}.} between MC and miner. $SINR_{n,s}$ and $Q$ are the signal-to-interference-plus-noise ratio (SINR) \footnote{For simplicity, we assume that the channel exhibits flat fading. However, this can be easily extended to frequency-selective fading channels as well \cite{9371426}.} and the bandwidth of PRB allocated between MC and miner, respectively. The total transmission time\footnote{We assume that model download time between miners and MCs is negligible compared with the transmission time as usually the downlink bandwidth is significantly larger than the uplink bandwidth \cite{kang2019incentive}.} \cite{kang2019incentive} for an MC in uploading the model weights of size $H$ to the miner during the FL process is as follows: $\tau_{n,s}^{trans} = T  H / \upsilon_{n,s}$. Given the same model across all MCs, the size of the local model remains constant, independent of the number of iterations or the amount of data available. The energy consumption in the transmission of the trained local model can be defined as follows \cite{lim2021towards}:
\begin{equation} \label{et}
        \zeta_{n,s} = E_{n,s}^{trans} = \tau_{n,s}^{trans} \mu_{n}.
\end{equation}
where, $\mu_{n}$ is the transmit power of MC, $C_n$.

Therefore, the utility, $U_{n,s}^{MC}$ of an MC, $C_n$ can be defined as follows:
\begin{equation} \label{umc}
    \begin{split}
        U_{n,s}^{MC} & = R_{n,s} - \varphi (E_{n}^{comp} + E_{n,s}^{trans}), \\
        & = R_{n,s} - \varphi (\kappa I_{n} \beta_{n} D_{n} f_{n}^2 + \zeta_{n,s}),
    \end{split}
\end{equation}
where $\varphi$ represents the cost per unit energy.

During the FL process, each MC has to upload the local model within a specified time. Particularly, to conduct FL process reliably and to find the feasible association candidates between miners and MCs, the time constraint needs to satisfy:
\begin{equation} \label{timecons}
        \tau_{n}^{comp} + \tau_{n,s}^{trans} \leq \tau_{th}, \forall C_n \in \mathbb{C}, \forall M_s \in \mathbb{M},
\end{equation}
where $\tau_{th}$ is the predefined threshold.

\subsection{Federated Learning}

We define a vector $\textbf{w}^{(n)}$ as weights related to the FL model for MC, $C_n$. We introduce the loss function $l(\textbf{w}^{(n)}, x_n^d)$ of the model, which indicates the FL performance over an input sample. The loss function varies depending on the learning problem. Different learning problems use different loss functions. 
In smart healthcare, loss defines how far the model's prediction or the outcomes is from the actual outcomes of the health condition \cite{10.1145/3410566.3410598}. The FL local model training problem at MC $C_n \in \mathbb{C}$ for miner $M_s$, using its own data $X_n$ can be formulated as follows \cite{reisizadeh2020fedpaq}:
\begin{equation} \label{learningproblem}
    \min_{\textbf{w}^{(n)}} \mathcal{L}_{n,s}(\textbf{w}^{(n)}) \min_{\textbf{w}^{(n)}} \frac{1}{D_n} \sum_{x_n^d \in X_n} l(\textbf{w}^{(n)}, x_n^d).
\end{equation}

In order to get better privacy, we integrate DP in the local model training. DP ensures that the output of a differentially private algorithm gives the same output with $\epsilon$ error whether or not a local data sample is included in the input of the algorithm. $\epsilon$ is the privacy budget that is bound on the loss of the privacy of the algorithm. 
We use the variant of DP definition introduced in \cite{Abadi_2016}, that satisfies ($\epsilon, \delta$)-DP, where $\delta$ defines the bound that the privacy guarantee does not hold (which is preferably very small positive number).
\begin{definition}
     A randomized algorithm $\mathcal Z:\mathcal X\rightarrow \mathcal Y$ with domain $\mathcal{X}$ and range $\mathcal{Y}$ satisfies ($\epsilon, \delta$)-DP if for any two adjacent datasets D' and D" that differ by one data record and for any subset $O \subseteq \mathcal Y$, the following probability condition holds:
\begin{equation}
    Pr[\mathcal{Z}(D')\in O] \leq  e^{\epsilon} Pr[\mathcal{Z}(D'')\in O] +\delta.
\end{equation}
\end{definition}
Privacy can be achieved in the algorithm by adding random noise to the gradient computation so that the resultant model is noisy.
Therefore, we update the weights at $i$-th ($i \in [1, I_n])$ local iteration for optimizing the loss function as follows \cite{Abadi_2016}:
\begin{equation} \label{updateproblem}
    \textbf{w}^{(n,i+1)} = \textbf{w}^{(n,i)} - \alpha G'',
\end{equation}
where $\alpha$ is the learning rate of the model. Moreover, $G''$ is calculated using Eq. \eqref{noisegradproblem}:
\begin{equation} \label{noisegradproblem}
    G'' = G' + \mathcal{N}(0, \sigma^2 A^2\mathbb{I}),
\end{equation}
where $\mathcal{N}(0, \sigma^2 A^2\mathbb{I})$ is the Gaussian noise, $\mathbb{I}$ is the identity matrix and $G'$ is the clipped gradient using $L_2$ norm of the gradient calculated as follows:
\begin{equation} \label{clipgradproblem}
    G' = \Delta  \mathcal{L}_{n,s}(\textbf{w}^{(n)})/ \max \left(1, \frac{||\Delta  \mathcal{L}_{n,s}(\textbf{w}^{(n)})||_2}{A}\right),
\end{equation}
where $\sigma = \sqrt{2\log\frac{1.25}{\delta}} / \epsilon$ for $\epsilon > 0$ and $A$ is the gradient bound.
DP requires bounding the impact of each data sample on $G''$. So, each gradient is clipped in the $L 2$ norm, i.e., the gradient $\Delta  \mathcal{L}_{n,s}(\textbf{w}^{(n)})$ is replaced by $\Delta  \mathcal{L}_{n,s}(\textbf{w}^{(n)})/ \max \left(1, \frac{||\Delta  \mathcal{L}_{n,s}(\textbf{w}^{(n)})||_2}{A}\right)$, for a gradient bound $A$. This clipping ensures that if $||\Delta  \mathcal{L}_{n,s}(\textbf{w}^{(n)})||_2 \leq A$, then $\Delta  \mathcal{L}_{n,s}(\textbf{w}^{(n)})$ is preserved, whereas if $||\Delta  \mathcal{L}_{n,s}(\textbf{w}^{(n)})||_2 > A$, gradient gets scaled down to be of norm $A$. This gives the global minimization problem over the collection of all the loss functions from MCs. This global minimization problem is minimized by finding the optimal weights, $\textbf{W}^{T}$ and is defined as follows:
\begin{equation} \label{wcost}
    \textbf{W}^{T} = \min_{\textbf{W}} J(\textbf{W}),
\end{equation}
where $J(\textbf{W})$ is the total loss over the collection of all the MCs, given as follows:
\begin{equation} \label{cost}
    J(\textbf{W}) = \frac{1}{\sum_{C_n \in \mathbb{C}} \sum_{M_s \in \mathbb{M}} y_{n,s}} \sum_{C_n \in \mathbb{C}} \sum_{M_s \in \mathbb{M}} y_{n,s} \mathcal{L}_{n,s}(\textbf{w}^{(n)}),
\end{equation}
where $\textbf{W}$ is the weights of the global model. $\sum_{C_n \in \mathbb{C}, M_s \in \mathbb{M}} y_{n,s}$ gives the total number of MCs participated in the FL process. Moreover the descriptions of used
symbols are given in Table \ref{symboltable}.

\subsection{Problem Formulation}


We formulate a joint optimization problem having goal to maximize the utility of both the miner and MC and minimize the FL loss function while satisfying the FL process delay requirement. Utility of MC involves determining the miner associated with, amount of data present at MC for local model training, and the uplink transmit power of each MC for model update transmission to the miner. The utility of miner involves mining rewards for adding blocks to the blockchain. Therefore, the total utilities of miners and MCs is given by:
\begin{equation} \label{tuti}
    U = \sum_{C_n \in \mathbb{C}} \sum_{M_s \in \mathbb{M}} y_{n,s}(U_{n,s}^{Min} + U_{n,s}^{MC}),
\end{equation}
where $U_{n,s}^{Min}$ and $U_{n,s}^{MC}$ are defined in the above Eq. \eqref{fumi} and Eq. \eqref{umc}, respectively.

The minimization problem of Eq. \eqref{wcost} is equivalent to the following maximization problem:
\begin{equation} \label{maxcost}
    \textbf{W}^{T} = \max_{\textbf{W}} \{-J(\textbf{W})\}.
\end{equation}
As a result, we combine both the factors i.e., utilities of miners and MCs, and FL process loss function as follows:
\begin{equation} \label{prblemcombined}
    F = \rho U + \eta \{-J(\textbf{W})\},
\end{equation}
where $\rho$ and $\eta$ are the weights assigned to the utility and the FL loss function and $\rho + \eta = 1$.

Therefore, the optimization problem of the system model can be formulated as follows: 
\begin{equation}\label{Prob:main}
\begin{array}{ll}
\displaystyle \max_{y_{n,s}, \textbf{W}} & F
\\[2ex]
\text{S. T.} & \text{1.1. }  \displaystyle U_{n,s}^{MC} > 0, U_{n,s}^{Min} > 0, \forall C_n \in \mathbb{C}, \forall M_s \in \mathbb{M},
\\[2ex]
& \text{1.2. } \displaystyle \sum_{C_n \in \mathbb{C}} \sum_{M_s \in \mathbb{M}} y_{n,s} > 0,
\\[2ex]
& \text{1.3. }  \displaystyle \text{Eqs.} \hspace{1mm} \eqref{yns} \hspace{1mm} \text{and} \hspace{1mm} \eqref{timecons},
\end{array}
\tag{P1}
\end{equation}
$\forall C_n \in \mathbb{C}, \forall M_s \in \mathbb{M}$.
Constraint 1.1 tells that utility of MCs and miners should be greater than zero.
Constraint 1.2 tells that at least one association between miners and MCs is possible in the FL process. Constraint 1.3 is described in the above Eq. \eqref{yns} and Eq. \eqref{timecons}, respectively. 

The problem given in \ref{Prob:main} is challenging since it involves the optimization of utilities as well as FL loss values. We decouple this problem into the optimization of utility and optimization of FL loss and solve them separately. Firstly, the optimization of utility is solved for the Miner-MC Association (MMA) by applying Algorithm \ref{algo.matching}. Then the FL loss optimization is solved using Stochastic Gradient Descent (SGD) algorithm with DP and blockchain technology as described in Algorithm \ref{fdBlock}.

\section{Miner-MC Association} \label{matching}

Associations between miners and MCs are formed in coordination with a trusted entity as shown in Fig. \ref{flowchart}. To perform the association, MCs and miners submit their computational power and data size, and amount of mining reward to the trusted entity, respectively. Here, computational power refers to the CPU cycles of each MC. Associations between miners and MCs are formally defined as follows:

\begin{definition}MMA: An association between miner $M_s \in \mathbb{M}$ and MC $C_n \in \mathbb{C}$ is a mapping $g:\mathbb{C} \rightarrow \mathbb{M}$, such that $g(C_n) = M_{s}, \forall C_n \in \mathbb{C}, \forall M_s \in \mathbb{M}$.
\label{asssociation}

\end{definition}
However, performing MMA in practice is subject to some constraints. To perform the FL process reliably, all the constraints 1.1, 1.2 and 1.3 need to satisfy. Thus, we define a feasible MMA as follows:
\begin{definition} Feasible MMA: An MMA is feasible, if:
\label{feasible}
    \begin{itemize}
        \item $\forall C_n \in \mathbb{C}$, each of its association with miner, $g(C_n) = M_s$ should satisfy Eq. \eqref{timecons}
        
        \item $\forall C_n \in \mathbb{C}$, there exists at most one association with miner, i.e., $|\{g(C_n)\}| \leq 1$.
    \end{itemize}
\end{definition}

The above Definition \ref{asssociation} implies that $g$ is a many-to-one association i.e., $g(C_n)$ is not unique. The interpretation of $g(C_n) = \phi$ implies that for some $C_n \in \mathbb{C}$, there is no association due to non satisfactory of constraints. 
The result of the function determines the successfully associated MC and the association between the miner and MC, e.g., $\mathbb{K}$ and $g \equiv \mathbb{Y}$, where
\begin{equation} \label{kasso}
    \mathbb{K} = \{C_n|y_{n,s} = 1, \forall C_n \in \mathbb{C}, \forall M_s \in \mathbb{M} \}
\end{equation}
and 
\begin{equation} \label{ynsasso}
    \mathbb{Y} = \{y_{n,s}|y_{n,s} = 1, \forall C_n \in \mathbb{C}, \forall M_s \in \mathbb{M} \}.
\end{equation}

\begin{algorithm}[!t] 
	\begin{algorithmic}[1]
		\item[]\hspace{-0.6 cm} \textbf{Input:} Set of MCs $\mathbb{C}$, set of miners $\mathbb{M}$, number of data samples $D_n$, CPU cycles $f_n$, cycles per sample $\beta_n$, threshold $\tau_{th}$.
		
		\item[]
	   \State \textbf{Initialization:} $U_{n,s}^{Min}$ and $U_{n,s}^{MC}, \forall C_n \in \mathbb{C}, \forall M_s \in \mathbb{M}$ 
        \State \textbf{while} $(|\mathbb{C}| \neq 0)  $ \textbf{do}
        
            \State \hspace{0.30 cm} Find $\mathbb{A}_{s} = \{C_n|(\tau_{n}^{comp} + \tau_{n,s}^{trans}) \leq \tau_{th}, \forall C_n \in \mathbb{C}\}$ and 
            \item[] \hspace{0.30 cm} $\mathbb{B}_{n} = \{M_s|U_{n,s}^{Min} > 0, \forall M_s \in \mathbb{M}\}$ 
	   
    	    \State \hspace{0.30 cm} Arrange the utility lists in non increasing order
    	    \item[] \hspace{0.30 cm} $\Upsilon_{n} = desc\{U_{n,s}^{Min}|\forall M_s \in \mathbb{B}_n\}$ and 
    	    \item[] \hspace{0.30 cm} $\Delta_{s} = desc\{U_{n,s}^{MC}|\forall C_n \in \mathbb{A}_{s}\}$, 
    		
    	    \State \hspace{0.30 cm} Create preference lists $P_{n}$ and $P_{s}$ using $\Upsilon_{n}$ and $\Delta_{s}$
            \State \hspace{0.30 cm} $\mathbb{W}_{s} \leftarrow \phi, Rej_s \leftarrow \phi, Rej_n \leftarrow \phi, \forall C_n \in \mathbb{C}, \forall M_s \in \mathbb{M}$
    		\State \hspace{0.30 cm} \textbf{for all} $C_n \in \mathbb{C}$ \textbf{do}
    		\State \hspace{0.70 cm} Find the miner $M_s$ in $P_{n}$ with the highest utility,
    		\item[] \hspace{0.70 cm} such that the utility of which is $U_{n,s*}^{Min}$
    		\State \hspace{0.70 cm} \textbf{if} $M_{s*} = 0$ \textbf{then}
    		    \State \hspace{1.10 cm} $y_{n,s} = 0,$ $\forall M_s \in P_{n}$

    		\State \hspace{0.70 cm} \textbf{else}
    		    \State \hspace{1.10 cm} $C_n$ applies to $M_s$
                \State \hspace{1.10 cm} $M_s$ removes $C_n$ from $P_{n}$ and add it into $\mathbb{W}_{s}$
    		\State \hspace{0.70 cm} \textbf{end if}
        	\State \hspace{0.30 cm} \textbf{end for}
            \item[]
            \State \hspace{0.30 cm} \textbf{for all} $M_s \in \mathbb{M}$ \textbf{do}
    		\State \hspace{0.70 cm} Find the MC, $C_{n*}$ in $\mathbb{W}_{s}$ according to $P_{s}$ with the
    		\item[] \hspace{0.70 cm} highest utility
    		\State \hspace{0.70 cm} $g(C_{n*}) = M_s, y_{{n*},s} = 1$, $\mathbb{Y} \leftarrow \mathbb{Y} \cup  \{y_{n*,s}\}$
    		\State \hspace{0.70 cm} $y_{n,s} = 0$, $\forall C_{n} \in \mathbb{W}_{s}$, $C_n \neq C_{n*}$
            \State \hspace{0.70 cm} Remove MCs in $\mathbb{W}_{s}$ (except $C_{n*}$) from $P_{s}$ into $Rej_{s}$
    		\State \hspace{0.70 cm} Rejected MCs in $Rej_{s}$ adds $M_s$ from $P_{n}$ into $Rej_{n}$
    		\State \hspace{0.70 cm} $\mathbb{K} \leftarrow \mathbb{K} \cup  \{C_{n*}\}$, $\mathbb{C} \leftarrow \mathbb{C} - \{C_{n*}\}$
    		\State \hspace{0.30 cm} \textbf{end for}
    		\item[]
    		\State \hspace{0.30 cm} \textbf{if} No MCs are rejected \textbf{then}
    		    \State \hspace{0.70 cm} Break
    		\State \hspace{0.30 cm} \textbf{end if}
    		
	   \State \textbf{end while}
	   \item[]
	   \textbf{Output:} Association between miners and MCs i.e., $\mathbb{Y}$ and successfully associated MCs i.e., $\mathbb{K}$.
		\caption{Proposed MMA Algorithm} 
		\label{algo.matching}
	\end{algorithmic}
\end{algorithm}

\subsection{Proposed MMA Algorithm}

The proposed MMA is based on the deferred acceptance algorithm \cite{gale1962college}. We modify the deferred acceptance algorithm in particular due to its feasibility and stability, allowing the proposed MMA algorithm to find a stable association result as described in the following.

In the beginning, miners and MCs select feasible association candidates that meet the criteria, i.e., $(\tau_{n}^{comp} + \tau_{n,s}^{trans}) \leq \tau_{th}$
as defined in Definition \ref{feasible} and $U_{n,s}^{Min} > 0$, respectively. Miner $M_s$'s and MC $C_n$'s feasible association candidates are denoted by $\mathbb{A}_{s}$ and $\mathbb{B}_{n}$, respectively (line 3). We arrange the utility lists of MC, $C_n$ and miner, $M_s$ in non increasing order, denoted by $\Upsilon_{n} = desc\{U_{n,s}^{Min}|\forall M_s \in \mathbb{B}_n\}$ and $\Delta_{s} = desc\{U_{n,s}^{MC}|\forall C_n \in \mathbb{A}_{s}\}$, respectively (line 4), where $desc\{\cdot\}$ is to represent the ordered list in non-increasing order. Miners and MCs then construct their preference lists by arranging the utilities in non-increasing order based on the feasible candidate list (line 5). The preference lists of miners and MCs present in $\Upsilon_{n}$ and $\Delta_{s}$ are given by $P_{n}$ and $P_{s}$, $\forall C_n \in \mathbb{C}, \forall M_s \in \mathbb{M}$. At the start of each association iteration, all waiting lists, $\mathbb{W}_s$ and miner $M_s$'s and MC, $C_n$'s rejection lists denoted by $Rej_{s}$ and $Rej_{n}$, respectively are initialized as empty lists (line 6) i.e., $\mathbb{W}_s \leftarrow \phi$, $Rej_{s} \leftarrow \phi$, and $Rej_{n} \leftarrow \phi$,$\forall M_s \in \mathbb{M}, \forall C_n \in \mathbb{C}$.

Once the preference lists of miners and MCs are set up, the main matching algorithm between miners and MCs begin as shown in Algorithm \ref{algo.matching}. At each iteration, $P_{s}$, $P_{n}$, $Rej_{s}$, $Rej_{n}$ and $y_{n, s}$, $\forall C_n \in \mathbb{C}, \forall M_s \in \mathbb{M}$, are updated. Furthermore, MC, $C_n$ selects its preferred $M_s$ according to the preference list (line 8). Then MC $C_n$ applies to its preferred miner $M_{i*}$ in $P_{s}$ for association, i.e., $U_{n,s*}^{Min} > U_{n,s}^{Min}, \forall M_s \in P_{n}, M_{s} \neq M_{s*}$ (line 12). Every miner $M_s \in \mathbb{M}$ receives a set of request for association from MCs. These requests are added to $\mathbb{W}_{s}$, called a waiting list, $\forall M_s \in  \mathbb{M}$ (line 13). If MC, $C_n$ has the greatest utility in waiting list $\mathbb{W}_{s}$ of miner $M_s$, miner $M_s$ accepts the MC and rejects the other MCs in $\mathbb{W}_{s}$ (line 17). The association between miner and MC is updated as $y_{n,s}$ (line 18).
Miners remove all the MCs except for $C_{n*}$ from its preference list and add it into the rejected list (line 20). 
The rejected MCs in $Rej_{s}$ remove the miners who reject them from their preference lists and add them to their rejected lists (line 21). 
All MCs that are associated with a miner are added to set $\mathbb{K}$ and removed from the set of MCs, $\mathbb{C}$ (line 22). 
If no MCs are rejected i.e., all the MCs are associated with one of the miners, then this completes MMA algorithm (lines 24-26). 
Otherwise, the above-mentioned procedure is repeated until no more MCs are rejected (line 2-27).

\subsection{Theoretical Analysis of MMA Algorithm}

The stability of the association algorithm implies the robustness to change in the association that increases the utilities of miners and MCs. If the association is unstable, an MC is willing to change its associated miner if favourable to the MC. Such a network with unstable associations eventually results in an unsatisfactory and unreliable association. For a stable MMA, the condition of the nonexistence of blocking pairs must satisfy. We formally define the blocking pairs as follows.
\begin{definition} Blocking Pair: For every MC, $C_n$, a pair $(C_n, M_s)$ is defined as blocking pair if all the following conditions are satisfy:
    \begin{itemize}
        \item $C_n$ associated with miner $M_{s*} \in \mathbb{M}$.
        
        \item There exists another pair $(C_n, M_s)$, such that $U_{n,s}^{MC} > U_{n,s*}^{MC}, M_{s} \neq M_{s*}, M_s \in \mathbb{M}$.
        
    \end{itemize}
    \label{BP}
\end{definition}

The higher utility can be obtained by blocking pairs. This indicates that the MC has a strong desire to change the association, implying that the association is unstable. Based on the definition of blocking pair, we define the stability of the MMA in the following. 

\begin{definition} Stability:
     An MMA is stable if it satisfies the condition of nonexistence of blocking pairs.
\end{definition}

As a result, we can make Lemma \ref{prepo1} concerning the stability of the MMA formed by applying Algorithm \ref{algo.matching}.

\newtheorem{prop}{Lemma}
\begin{prop}\label{prepo1}
Algorithm \ref{algo.matching} gives stable association.
\end{prop}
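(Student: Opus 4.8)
The plan is to argue by contradiction, following the classical Gale--Shapley deferred-acceptance template adapted to the Miner--MC setting. Suppose the association $g$ produced by Algorithm~\ref{algo.matching} is not stable. Then by the definition of stability there exists a blocking pair $(C_n, M_s)$ in the sense of Definition~\ref{BP}: $C_n$ is matched to some $M_{s*} \neq M_s$ under $g$, yet $U_{n,s}^{MC} > U_{n,s*}^{MC}$, so $C_n$ strictly prefers $M_s$ to its assigned miner $M_{s*}$. I would first record that $(C_n, M_s)$ must be a \emph{feasible} pair, i.e.\ $C_n \in \mathbb{A}_s$ and $M_s \in \mathbb{B}_n$; otherwise $M_s$ never appears in $P_n$ and the comparison of utilities is vacuous, or $U_{n,s}^{Min}\le 0$ violates Constraint~1.1, so no blocking can occur through $M_s$.

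\textbf{Key steps.} The core of the argument is a monotonicity invariant on the MC side: across the iterations of the \textbf{while} loop, the miner to which an MC applies can only move \emph{down} its preference list $P_n$, because once $C_n$ is rejected by a miner that miner is moved into $Rej_n$ and deleted from $P_n$ (lines 20--21), and $C_n$ always applies to the highest-utility remaining miner (line~8). Hence, since $C_n$ ends up matched to $M_{s*}$ which it ranks strictly below $M_s$, it must at some earlier iteration have applied to $M_s$ and been rejected --- either it was never the top choice in $\mathbb{W}_s$, or it was displaced later. I would then invoke the symmetric monotonicity on the miner side: a miner $M_s$ only ever replaces its tentatively held MC by one of strictly higher $U^{MC}_{\cdot,s}$ value (line~17 keeps $C_{n*}=\argmax$ over $\mathbb{W}_s$, and subsequent rounds can only bring applicants that $M_s$ prefers even more, since worse ones were already put in $Rej_s$). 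Therefore, at termination $M_s$ is matched to some MC $C_{n'}$ with $U_{n',s}^{MC} \ge U_{n,s}^{MC}$ (it rejected $C_n$ in favour of someone at least as good, and never got worse afterward). Combining $U_{n',s}^{MC} \ge U_{n,s}^{MC} > U_{n,s*}^{MC}$ with the fact that $M_s$'s preference list $P_s$ orders MCs by exactly this utility shows $M_s$ does not prefer $C_n$ to its own partner, contradicting the existence of the blocking pair.

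\textbf{Main obstacle.} The delicate point is establishing the two monotonicity invariants rigorously given the somewhat informal pseudocode --- in particular, verifying that a miner's tentatively-held partner never degrades in utility across successive passes of the outer loop, since $\mathbb{W}_s$ is reset to $\phi$ at the start of each iteration (line~6) while the rejection bookkeeping in $Rej_s$ persists through $P_s$. I would handle this by an induction on the iteration count of the \textbf{while} loop, with induction hypothesis that at the end of iteration $k$ every matched miner holds the best (by $U^{MC}_{\cdot,s}$) MC among all MCs that have ever applied to it, and that every as-yet-unmatched MC has been rejected by exactly the miners deleted from its $P_n$; the termination condition (lines 24--26, no MCs rejected) then freezes this invariant into the output, after which the contradiction above closes the proof. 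A secondary, lesser obstacle is confirming the algorithm terminates at all, which follows because each non-terminating iteration strictly shrinks at least one $P_n$ (a rejection deletes a miner), and the $P_n$'s are finite.
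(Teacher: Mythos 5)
Your overall strategy---contradiction via the classical deferred-acceptance argument---is in the same spirit as the paper's proof, which is likewise a brief contradiction argument, but two of your key steps do not line up with the objects the paper actually defines, and together they leave a genuine gap. First, your MC-side monotonicity step infers from $U_{n,s}^{MC} > U_{n,s*}^{MC}$ that $C_n$ ranks $M_s$ above $M_{s*}$ in $P_n$, hence must have applied to $M_s$ earlier and been rejected. But in Algorithm \ref{algo.matching} the list $P_n$ is built from $\Upsilon_{n} = desc\{U_{n,s}^{Min}\mid M_s \in \mathbb{B}_n\}$, i.e.\ MCs propose in decreasing order of the \emph{miner's} utility $U_{n,s}^{Min}$, while only the miner side (via $\Delta_s$ and $P_s$) ranks by $U_{n,s}^{MC}$. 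Since $U_{n,s}^{Min} = R_s - R_{n,s}$ and $U_{n,s}^{MC} = R_{n,s} - \varphi(E_n^{comp}+E_{n,s}^{trans})$ need not induce the same ordering over miners, a blocking pair in the sense of Definition \ref{BP} does not imply that $C_n$ ever applied to $M_s$, so the chain ``rejected by $M_s$ $\Rightarrow$ $M_s$ holds some $C_{n'}$ with $U_{n',s}^{MC} \ge U_{n,s}^{MC}$'' never gets started.

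Second, your closing contradiction is that $M_s$ does not prefer $C_n$ to its current partner. That refutes a blocking pair in the standard two-sided Gale--Shapley sense, but Definition \ref{BP} in this paper is one-sided: $(C_n, M_s)$ blocks as soon as $U_{n,s}^{MC} > U_{n,s*}^{MC}$, with no condition whatsoever on $M_s$'s preferences. Hence even if both monotonicity invariants were established, the conclusion ``$M_s$ holds an MC at least as good as $C_n$'' does not contradict the blocking pair as defined, and the Lemma as stated does not follow. The paper's own proof instead stays entirely on the MC side: it argues (tersely) that under Algorithm \ref{algo.matching} the miner $M_r$ with which $C_n$ ends up associated already gives $C_n$ a utility at least as large as any miner it failed to associate with, which directly contradicts $U_{n,q}^{MC} > U_{n,r}^{MC}$. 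To close your argument you would need either to work with the preference lists the algorithm actually constructs and reconcile them with the $U^{MC}$-based blocking-pair definition, or to prove the MC-side optimality property that the paper asserts.
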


\begin{proof}
To prove the stability of the MMA algorithm, we show that there exists no blocking pair. Let there be a blocking pair $(C_n, M_q)$ for MC, $C_n$ after associated with miner $M_r$, $ \forall C_n \in \mathbb{C}, M_q,M_r \in \mathbb{M}$.
According to Definition \ref{BP}, $(C_n, M_q)$ satisfies $U_{n,q}^{MC} > U_{n,r}^{MC}$. According to Algorithm \ref{algo.matching}, if MC, $C_n$ is associated with miner $M_r$, then $C_n$ associated with $M_r$ should have a greater utility than that the utility obtained by association with other miners. However, $C_n$ fails to form association with $M_q$ if and only if the utility of $C_n$ is higher if associated with miner $M_r$ than the utility obtained from the association with miner $M_q$. This contradicts with $(C_n, M_q)$, i.e., $U_{n,q}^{MC} > U_{n,r}^{MC}$. As a result, there exists no blocking pair after forming the association between MCs and miners by applying Algorithm  \ref{algo.matching}, which thus proves the Lemma.
\end{proof}

\begin{theorem}\label{theoremcompu}
Time complexity of MMA algorithm is
$O(N^{2}S\log NS)$.
\end{theorem}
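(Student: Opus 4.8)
The plan is to factor the running time of Algorithm~\ref{algo.matching} as (number of executions of the outer \textbf{while} loop) times (cost of one execution), and to bound the two factors separately. For a single pass I would go line by line. Building the feasibility sets $\mathbb{A}_s$ and $\mathbb{B}_n$ in line~3 means testing $\tau_n^{comp}+\tau_{n,s}^{trans}\le\tau_{th}$ and $U_{n,s}^{Min}>0$ for every miner-MC pair; each quantity is the closed-form expression of Eqs.~\eqref{fumi}, \eqref{umc} and \eqref{timecons}, so this costs $O(1)$ per pair and $O(NS)$ in total. The bottleneck of a pass is the sorting in line~4: for each of the at most $S$ miners we sort a list $\Delta_s$ of at most $N$ feasible MCs, which is $O(SN\log N)$ in aggregate, and for each of the at most $N$ MCs we sort a list $\Upsilon_n$ of at most $S$ feasible miners, which is $O(NS\log S)$; together this is $O\!\left(NS(\log N+\log S)\right)=O(NS\log NS)$.

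The remaining work in a pass is linear in $NS$: constructing the preference lists (line~5) and clearing the waiting and rejection lists (line~6) is $O(NS)$; in the first inner loop (lines~7--15) each MC reads off the top remaining entry of its already-sorted preference list and performs one application and one $O(1)$ insertion into a waiting list, which is $O(NS)$ overall; and in the second inner loop (lines~16--23) each miner scans its waiting list for its best MC and performs the rejection bookkeeping. Since each MC lies in at most one $\mathbb{W}_s$ per pass we have $\sum_s|\mathbb{W}_s|\le N$, and with a linked-list representation of the preference lists (plus $O(NS)$-space back-pointers built when the lists are created) the deletions of rejected MCs cost $O(NS)$ in aggregate, so one pass costs $O(NS\log NS)$. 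For the number of passes, I would argue that each pass either ends by breaking because no MC was rejected (lines~24--26), or contains at least one rejection; but a rejection is issued only by a miner that, in the same loop, commits to its most preferred MC $C_{n*}$ and deletes it from $\mathbb{C}$ (line~22). So every non-breaking pass strictly decreases $|\mathbb{C}|$, and since $|\mathbb{C}|=N$ initially and the loop guard is $|\mathbb{C}|\neq 0$ there are at most $N$ such passes, i.e.\ $O(N)$ passes in all. Multiplying, the total cost is $O(N)\cdot O(NS\log NS)=O(N^2S\log NS)$, which is the claim.

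The step I expect to be the main obstacle is the one-progress-per-pass argument: one has to be sure the algorithm cannot ``spin'', rejecting MCs round after round without ever permanently associating anyone, since otherwise the pass count would not be $O(N)$. The resolution is exactly the observation above -- rejections are emitted only inside the second inner loop, and only by a miner that has just fixed a partner $C_{n*}$ and removed it from $\mathbb{C}$ -- so ``some rejection occurred in this pass'' implies ``$|\mathbb{C}|$ decreased in this pass''. A secondary point worth stating explicitly is that the two inner loops really do run in $O(NS)$ with the linked-list-with-back-pointers representation, so that line~4 is the unique logarithmic term and no $\log$ factor is lost to bookkeeping; a naive array implementation would still be polynomial but could inflate the per-pass cost and hence the overall bound.
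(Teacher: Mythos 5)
Your proof is correct and follows essentially the same route as the paper's: bound one pass of the outer \textbf{while} loop by its dominant sorting cost $O(NS\log NS)$ (line 4, with the feasibility sets, preference lists and the two inner loops costing only $O(NS)$), then multiply by $O(N)$ passes to get $O(N^{2}S\log NS)$. The only difference is that you justify the $O(N)$ pass count with an explicit progress argument (every non-breaking pass commits some $C_{n*}$ and removes it from $\mathbb{C}$), where the paper simply asserts that lines 2--27 iterate $N$ times; this is a refinement of the same argument, not a different approach.
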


\begin{proof} 
Algorithm \ref{algo.matching} iteratively formed the associations between miners and MCs. We can see from the algorithm that the number of iterations depends on the number of MCs to be associated with miners. Let consider the worst-case scenario where an MC is associated with a miner after both MC and miner search all their association candidates.
The total association candidate in $\mathbb{A}_{s}$ and $\mathbb{B}_{n}$ could be $2NS$. 
Therefore, line 3 takes $O(NS)$ time.
Construction of utility lists $\Upsilon_{n}$ and $\Delta_{s}$ (line 4) take $O(NS\log S)$ and $O(NS\log N)$ using Heap sort algorithm, respectively.
Creation of $P_n$ and $P_s$ (line 5) take $O(NS)$.
Selection of candidate miners for each MC (lines 7-15) takes $N$ times. Thus, the worst-case computational complexity of lines 7-15 is $O(N)$. Similarly, the worst-case computational complexity of lines 16-23 is $O(S)$. Lines 2-27 iterates over the number of MCs i.e., $N$. Therefore, the worst-case computational time complexity of MMA algorithm is 
$O(N(NS + NS\log S + NS \log N + N + S))$ i.e.,
$O(N^{2}S\log NS)$.
\end{proof}

This completes Stage 1 i.e., the association between miners and MCs. After the association, the actual privacy-preserving FL process begins. In the next section,
Stage 2 i.e., Blockchain and Privacy-preserving FL Inspired Smart Healthcare (BPFISH) is discussed in detail.

\begin{algorithm}[!t]
	\begin{algorithmic}[1]
		\item[]\hspace{-0.6 cm} \textbf{Input:} Training samples $X_n$, learning rate $\alpha$, batch size $B$, loss function $\mathcal{L}_{n,s}(\textbf{w}^{(n)}) = \frac{1}{D_n} \sum_{x_n^d \in X_n} l(\textbf{w}^{(n)}, x_n^d)$, number of global iteration $T$, noise scale $\sigma$, Gradient bound $A$.
		\item[]
		\\
		Initialized weights $\textbf{w}^{(0)}, \textbf{w}^{(n)} \leftarrow \textbf{w}^{(0)}$
		\State \textbf{for} each global iteration $t = 1, 2, \cdots, T$ \textbf{do}
		\item[]\hspace{0.30 cm} \textbf{Local Model Training} 
		
		\State \hspace{0.30 cm} \textbf{for} each $C_n \in \mathbb{K}$ \textbf{do}     \hspace{1 cm} \textbackslash \textbackslash Perform in parallel
        \State \hspace{0.70 cm} \textbf{for} each local iteration  $i = 1, 2, \cdots, I_n$ \textbf{do}
    		\State \hspace{1.10 cm} \textbf{for} each batch $b \in X_n$ \textbf{do}
    		    \State \hspace{1.50 cm} \textbf{for} each sample $x \in b$ \textbf{do}
            		\State \hspace{1.90 cm} Compute Loss $\mathcal{L}_{n,s}(\textbf{w}^{(n, t)})$
            		\State \hspace{1.90 cm} Compute gradient $G = \Delta  \mathcal{L}_{n,s}(\textbf{w}^{(n, t)})$
            		\State \hspace{1.90 cm} Gradient clipped $G' = G/ \max \left(1, \frac{||G||_2}{A}\right)$
        		\State \hspace{1.50 cm} \textbf{end for}
        		\State \hspace{1.50 cm} Add noise to the gradient $G'' = \frac{1}{B}(\sum_{x \in b} G'$ 
        		\item[] \hspace{1.50 cm} + $\mathcal{N}(0, \sigma^2 A^2\mathbb{I}))$
            	\State \hspace{1.50 cm} Updates local weights $\textbf{w}^{(n,t)} = \textbf{w}^{(n,t)} - \alpha  G''$
            	\State \hspace{1.50 cm} Upload $\textbf{w}^{(n,t)}$ to associated miner
            \State \hspace{1.10 cm} \textbf{end for}
	   	\State \hspace{0.70 cm} \textbf{end for}
		\State \hspace{0.30 cm} \textbf{end for}
		
		\item[]
		\item[]\hspace{0.30 cm} \textbf{Blockchained Miners}
    		\State \hspace{0.30 cm} \textbf{for}  each Miner $M_s \in \mathbb{M}$ \textbf{do}
        		\State \hspace{0.70 cm} Verify received $\textbf{w}^{(n,t)}$ and broadcast to all miners
        		\State \hspace{0.70 cm} Add $\textbf{w}^{(n,t)}$ to candidate block
        		\State \hspace{0.70 cm} Performs PoW and adds candidate block to
        		\item[] \hspace{0.70 cm} blockchain
        		\State \hspace{0.70 cm} Broadcast the new block to all miners
    		\State \hspace{0.30 cm} \textbf{end for}
	    
	        \item[]
    	    \item[] \hspace{0.30 cm} \textbf{Model Aggregation}
    	    \State \hspace{0.30 cm} Download local model weights from the associated 
    	    \item[] \hspace{0.30 cm} miner for $t$-th global iteration
    	    \State \hspace{0.30 cm} MCs Update the global weights as
    	    \item[] \hspace{0.30 cm} $\textbf{W}^{(t+1)} = \hspace{0.30 cm} \sum_{n=1}^{|\mathbb{K}|} p_n \textbf{w}^{(n,t)}$
    	    \State \hspace{0.30 cm} Initialize local weights as $\textbf{w}^{(n,t+1)} \leftarrow \textbf{W}^{(t+1)}$
    	    
    	    \State \textbf{end for}
    	    \vspace{0.2 cm}
    	    \item[]\hspace{-0.6 cm} \textbf{Output:} Optimal model weights, $\textbf{W}^{T}$.
	    
		\caption{Blockchain-based Privacy-preserving FL} 
		\label{fdBlock}
	\end{algorithmic}
\end{algorithm}

\begin{figure}[t!]
    \centering
    \includegraphics[width=1\linewidth]{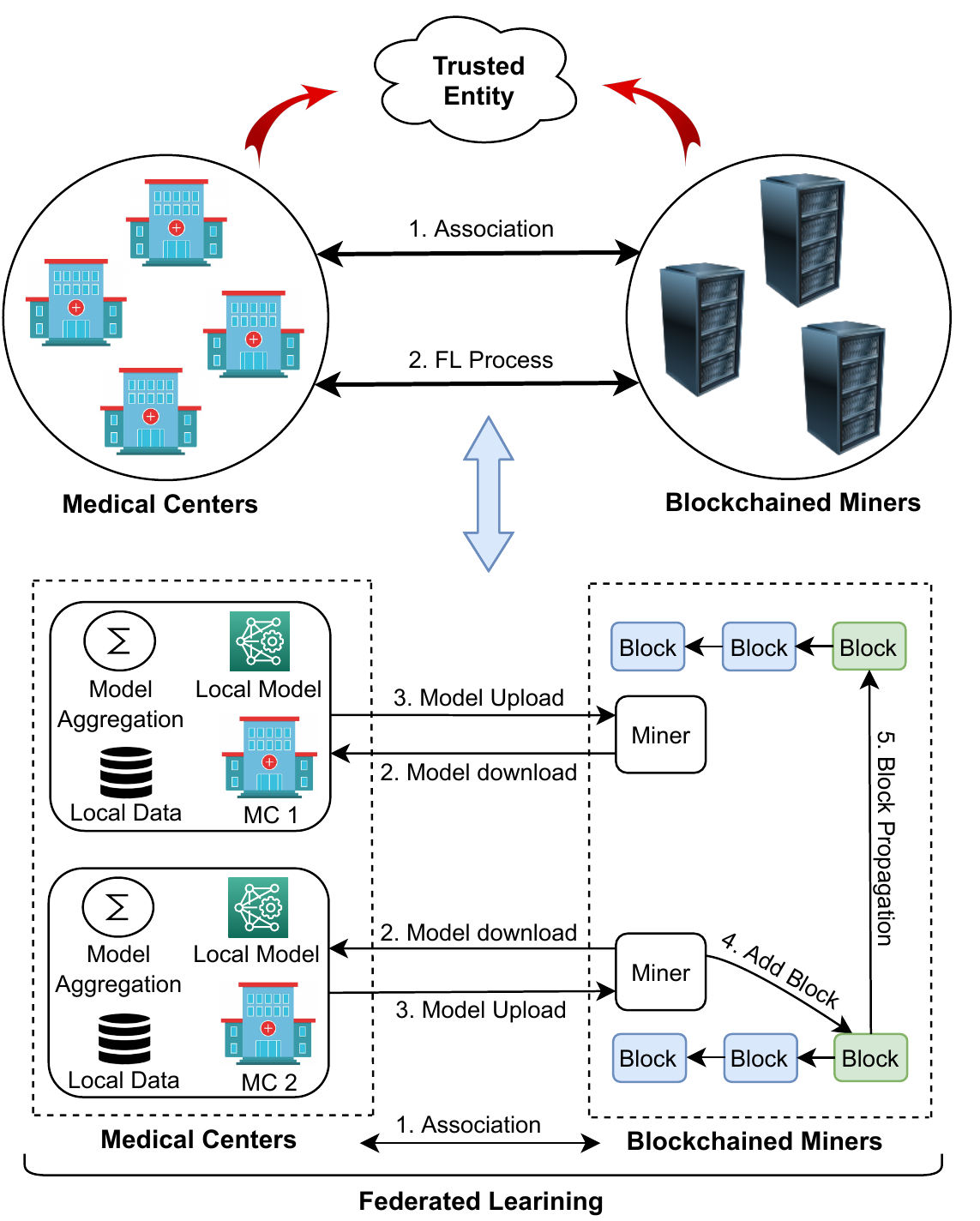}
    \caption{\label{flowchart}The proposed BPFISH framework.}
\end{figure}

\section{Proposed BPFISH Algorithm} \label{FL}

FL allows the collaborative learning of multiple MCs without sharing their local data. An adversary can use an inference attack to recreate the training data from the shared model. Therefore, we incorporate the concept of DP for privacy-preserving in model training. This can be achieved in the algorithm by adding random noise to the gradient computation. Moreover, we leverage the advantages of blockchain in this work. 
Blockchain provides an immutable and decentralized framework for model weights sharing. The proposed algorithmic steps are shown in the Fig. \ref{flowchart}. Specifically, FL process consists of two components: Blockchain for distributed local model weights sharing and local model training by MCs using their private data. Each MC trains the local model using their local data and updates the model weights to the miners. The FL local model training problem is to optimize the loss as given in Eq. \eqref{learningproblem}. The FL loss optimization is solved using Stochastic Gradient Descent (SGD) algorithm under DP as described in Algorithm \ref{fdBlock}.

In the beginning, blockchain adds the model that will be utilised by all MCs, as a genesis block on the blockchain. 
At the initialization step, each MC downloads the model weights present in the genesis block from its associated miner and initialized the weights for local model training (line 1). At each local iteration of the SGD, we computed the gradient $G$ for a batch of data samples and clipped each computed gradient (lines 6-10). Then add Gaussian noise drawn from Gaussian distribution to the average gradient and upload the updated weights to the associated miner (lines 11-13). This process repeats for each batch in every local iteration (lines 4-15). Each MC in the FL process performs lines 3-16 in parallel for every global iteration.

Miners add the local models from the associated MCs to the blockchain for distributed model weights sharing and provide an immutable framework. Miners verify the received weights and add them to the candidate block (lines 18-19). Miners perform PoW to add candidate block in the blockchain and broadcast the new block to all miners (lines 20-21). For model aggregation, MCs download local model from the associated miners and aggregate the model weights (lines 23-25) on-device in every global iteration. Each MC aggregates $|\mathbb{K}|$ local weights $\textbf{w}^{(n,t)}$ downloaded from its associated miner at $t$-th global iteration as follows:
\begin{equation} \label{eq2}
    \textbf{W}^{(t+1)} = \sum_{n=1}^{|\mathbb{K}|} p_n \textbf{w}^{(n,t)},
\end{equation}
where $p_n = \frac{D_n}{D}$ is the weightage given to each MC, $C_n$. $\textbf{w}^{(n,t)}$ is the local model weights at MC, $C_n$ at global iteration $t$. Here $D = \sum_{n=1}^{|\mathbb{K}|} D_n$, and 
$|\mathbb{K}|$ is the number of MCs in the set $\mathbb{K}$ i.e., number of MCs participated in FL process. 

After global model aggregation, each MC updates local weights using the global model for the next iteration (line 25). MC continues training using the updated local model weights and uploads them again to the associated miners for the next iteration. This process repeats in every global iteration (line 2-26). Finally, we obtained the optimal weights, $\textbf{W}^{T}$ after the FL process that gives the minimum value of loss function as given in Eq. \eqref{maxcost}.

\begin{table}[t]
	\begin{center}
		\caption{System parameters}
		\label{Table:Parameters}
		\begin{tabular}{l|c}
			\textbf{Parameters} & \textbf{Details}\\
			\hline
			$S$ & 5\\
			$N$ & 10-100\\
			Iterations & $T = 15, In = 10$\\
            Cycle rate of MC $f_n$ & 1-2.6 GHz\\
            Cycles per sample $\beta_n$ \cite{yang2020energy}& $[1,3]$ x $10^4$\\
            Bandwidth $Q$ & 20 MHz\\
            Number of allocated PRBs $V$ \cite{9371426} & 1-10\\
            $SINR_{n,s}$ & 13-20 dB\\
            Transmit powers $\mu_n$ \cite{yang2020energy}& 1-10 dB\\
            $\kappa$ \cite{yang2020energy}& $10^{-28}$\\
            Batch size & 32\\
            $R$ & 10 units\\
            $\varphi$ & 1 units\\
            Model weight size $H$ & 3.776 Kbits\\
			Weight parameters & $\rho$ = 0.5, $\eta$ = 0.5\\
			Threshold $\tau^{th}$ & 24 mins\\
			Initial learning rate $\alpha$ & 0.01\\
			
            \hline
			\end{tabular}
		\end{center}
\end{table}


\begin{theorem}\label{theoremcommu}
Communication complexity of Algorithm \ref{fdBlock} is $O(T(|\mathbb{K}|^{2}|w|)$.
\end{theorem}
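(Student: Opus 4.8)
The plan is to walk through Algorithm \ref{fdBlock} phase by phase within a single global iteration $t$, bound the number of bits exchanged over the network in each phase, take the dominant term, and then multiply by the $T$ repetitions of the outer loop. Throughout, I treat the number of local iterations $I_n$ and the number of mini-batches per MC as constants (they govern only on-device SGD work), and write $|w|$ for the size of one model-weight vector, which by assumption is the same for every MC. \emph{Local model training (lines 3--16):} the loss, gradient, clipping, noise addition, and weight update are all computed on-device and incur no communication; the only networking step is ``Upload $\textbf{w}^{(n,t)}$ to the associated miner'', where each of the $|\mathbb{K}|$ participating MCs sends one weight vector, for $O(|\mathbb{K}||w|)$ bits per global iteration.

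\emph{Blockchain phase (lines 18--22):} each miner verifies a received $\textbf{w}^{(n,t)}$, broadcasts it, appends it to its candidate block, performs PoW, and broadcasts the resulting new block. PoW is pure computation. Since every MC in $\mathbb{K}$ is associated with exactly one miner, at most $|\mathbb{K}|$ miners hold weights (equivalently, the constant number $S$ of miners participate), and a candidate block carries at most the $|\mathbb{K}|$ freshly received weight vectors, i.e.\ has size $O(|\mathbb{K}||w|)$; only this newly mined block, not the whole chain, is broadcast. Hence the inter-miner weight and block broadcasts cost $O(|\mathbb{K}|^2|w|)$ bits per global iteration (or $O(S|\mathbb{K}||w|)$ with $S$ constant). \emph{Model aggregation (lines 23--26):} each MC downloads the $|\mathbb{K}|$ local weight vectors from its associated miner, i.e.\ $O(|\mathbb{K}||w|)$ bits per MC, hence $O(|\mathbb{K}|^2|w|)$ bits in total; the weighted average $\textbf{W}^{(t+1)}=\sum_{n} p_n\textbf{w}^{(n,t)}$ and the re-initialization $\textbf{w}^{(n,t+1)}\leftarrow\textbf{W}^{(t+1)}$ are purely local.

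Summing the three phases, one global iteration exchanges $O(|\mathbb{K}|^2|w|)$ bits, the cost being dominated by the aggregation download (and the inter-miner block broadcast), while the upload term $O(|\mathbb{K}||w|)$ is of lower order. The outer loop repeats this $T$ times, giving total communication complexity $O(T|\mathbb{K}|^2|w|)$, as claimed. The routine part of this argument is counting the point-to-point uploads and downloads; the delicate step is the blockchain phase, where one must argue that the block broadcast does not dominate: this needs (i) treating $S$ as a constant parameter, or else bounding the number of block-holding miners by $|\mathbb{K}|$, and (ii) observing that each round broadcasts only the new block of size $O(|\mathbb{K}||w|)$, so that the growing chain length never enters the bound. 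One must likewise be careful to exclude PoW and the inner SGD loops from the communication count.
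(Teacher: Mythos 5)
Your proof is correct and follows essentially the same route as the paper: count per-iteration uploads $O(|\mathbb{K}||w|)$, aggregation downloads $O(|\mathbb{K}|^{2}|w|)$, and inter-miner broadcasts $O(|\mathbb{K}|S|w|)$, absorb the latter using $S\le|\mathbb{K}|$ (the paper states this assumption explicitly, matching your caveat), and multiply by $T$. No gaps worth flagging.
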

\begin{proof} 
We consider the number of model weights sent by every MC for the local model training to analyze the communication complexity of Algorithm \ref{fdBlock}.
Let $|w|$ be the number of weights in a local model at each MC. 
In FL, the communication of weights during upload (line 13) and download of local model weights (line 23) from the blockchain is $|w|$ and $|\mathbb{K}||w|$, respectively \cite{singh2019detailed}.
Therefore, communication per MC in each global iteration is $|w| + |\mathbb{K}||w|$ i.e., $(|\mathbb{K}|+1)|w|$. Since $|\mathbb{K}|$ number of MCs participated in the FL process,  the communication required for MCs in each global iteration is $|\mathbb{K}|(|\mathbb{K}|+1)|w|$.
Similarly, miners in the blockchain broadcast $|\mathbb{K}|$ local models (line 21) and at least $(S-1)|w|$ communication is required to broadcast a local model. Thus, communication per miner in each global iteration is $|\mathbb{K}|(S-1)|w|$. 
Lines 2-26 iterates over the total number of global iterations i.e., $T$.
Therefore, overall communication required in $T$ global iteration is $T(|\mathbb{K}|(|\mathbb{K}|+1)|w| + |\mathbb{K}|(S-1)|w|)$ i.e., $T(|\mathbb{K}|^{2}|w| + |\mathbb{K}||w|S)$.
Generally, the number of MCs participated in the FL process is more than the number of miners i.e., $|\mathbb{K}| \geq S$. As a result, the communication complexity of the Algorithm \ref{fdBlock} is $O(T(|\mathbb{K}|^{2}|w|)$.
\end{proof}

\section{Performance Analysis} \label{result}

In this section, we present the performance of our proposed BPFISH framework through simulation analysis. The proposed framework is simulated using Python 3.9 and Tensorflow 2.0 on Windows 10 Home PC with Intel(R) Core(TM) i7-10750H @ 2.60 GHz processor and 16 GB memory.

We evaluate the performance of the proposed BPFISH framework on the Chest X-Ray Images (Pneumonia) dataset consisting of 5856 Chest X-Ray images \cite{PaulMooneydataset}. The dataset is divided into 5270 training samples and 586 testing samples.
Training data samples are partitioned into 5270/$N$ equal parts, with one part given to each MC. 
The images in the dataset have different resolutions, while our model needs a fixed dimension. 
Therefore, the images are down-sampled to a fixed resolution of 150 × 150 images. 
Our model uses a neural network that contains fourteen layers — the first ten layers are convolutional layers and the last four are fully connected layers. We use ReLU activation function in each layer except sigmoid activation function in output layer. 
The first convolutional layer uses 16 filters of size 3 x 3 with a stride of 1 pixel. The number of filters doubles every two convolution layers. Every two convolution layers are followed by batch normalization and max-pooling layer. The fully-connected layers have 512, 128, 68 and 1 neurons. The 'dropout' \cite{10.5555/2627435.2670313} rates of 0.7, 0.5 and 0.3 are used in the first three fully-connected layers to prevent overfitting.
For the optimization of the convolutional neural network, we set the initial learning rate to 0.01 and reduced it by a factor of 0.3 once the loss stopped decreasing for two iterations. \begin{figure}[t]
    \centering
    \includegraphics[width=0.8\linewidth]{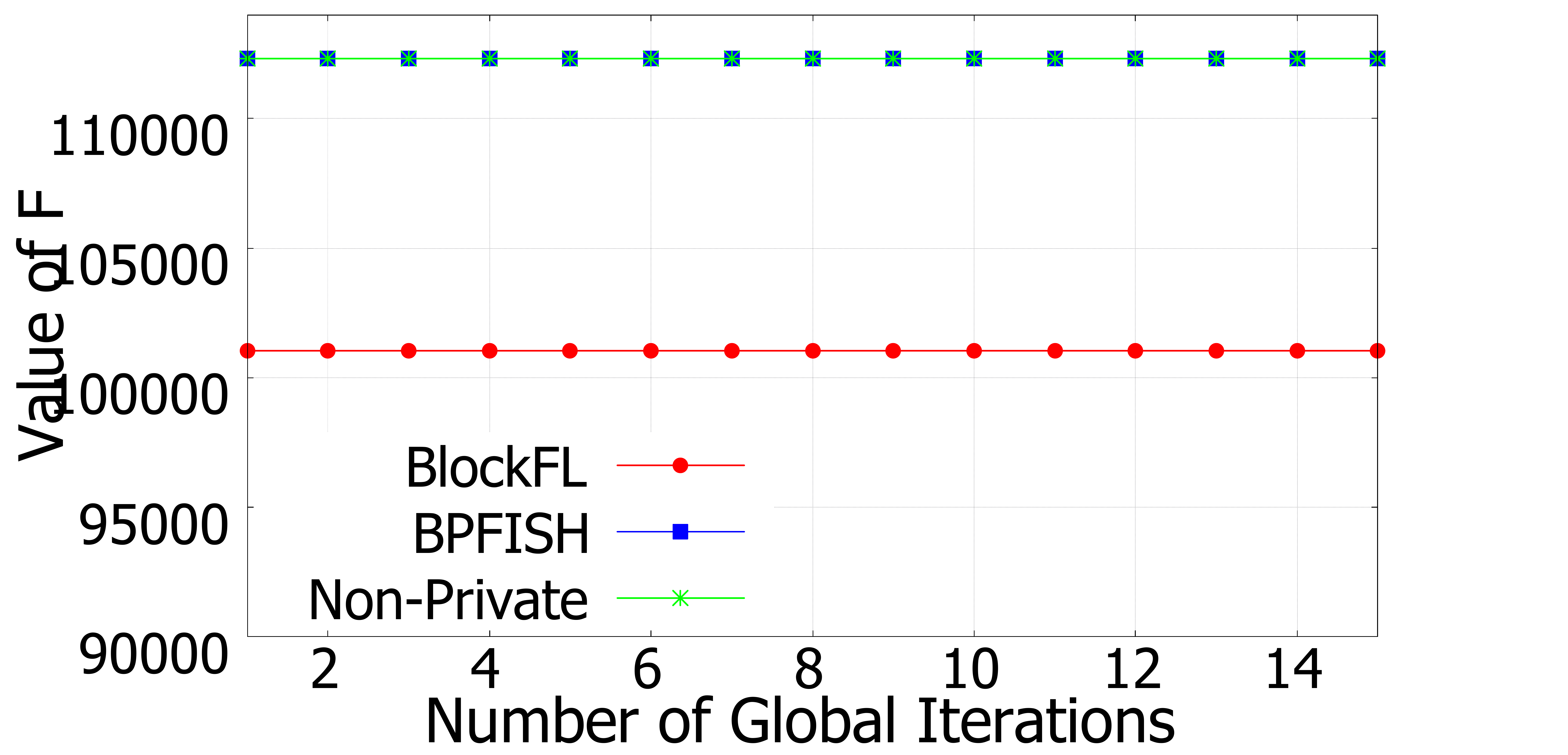}
    \caption{\label{utiweight}Analysis of the value of F.}
\end{figure}

In this experiment, cycle rate of each MC is set between 1 and 2.6 GHz, and cycles per sample $\beta_n$ are chosen uniformly from $[1, 3]$ x $10^{4}$ \cite{yang2020energy}.
Unless otherwise stated, we set coefficient $\kappa$ to $10^{-28}$ \cite{yang2020energy}, transmit powers $\mu_n$ from 1 to 10 dB \cite{yang2020energy}, model size $H$ = 3.776 Kbits, bandwidth $Q$ = 20 MHz and batch size $B$ = 32. Number of allocated PRBs $V$ are chosen from 1 to 10 and $SINR_{n,s}$ are chosen from 13 to 20 dB. 
The common parameters used in the experiment are shown in Table \ref{Table:Parameters}.

\subsection{Objective Function Analysis}

We plot the value of F as a function of global iteration, shown in Fig. \ref{utiweight}. 
We conduct the experiments by setting $N = 500$, $S = 50$, $T = 15$, $I_n = 10$, $\delta = 10^{-5}$, $\rho = 0.5$ and $\eta = 0.5$.
To get a fair comparison, we apply random association for BlockFL \cite{8733825} framework in which MCs select miners randomly. From the result, we can conclude that the proposed BPFISH is better compared to BlockFL and almost equal to non-private framework where non-private is the BPFISH without DP i.e., without adding noise to the gradient. The proposed BPFISH achieves 11.18\% better results on an average.
The reason for the better performance of BPFISH compared to BlockFL is that associations between miners and MCs in BPFISH are done more accurately using MMA algorithm in the proposed framework. The result of BPFISH is almost equal to that of non-private because both BPFISH and non-private framework used MMA algorithm and the difference in the value of loss function is small.

\subsection{Accuracy and Loss Comparison}

Fig. \ref{compp} compares the test accuracy and the value of loss function for 10 MCs with BlockFL and non-private framework using noise level $\sigma$ = 0.25 and gradient bound $A$ = 8.
As we can see from Fig. \ref{compacc}, the test accuracy increases as compared to BlockFL and the decrease in accuracy for private BPFISH as compared to the non-private framework is small and has little effect on the accuracy. 
In comparison to BlockFL, the proposed BPFISH obtains 10\% higher accuracy on average. 
As we can see from Fig. \ref{comploss}, the value of the loss function decrease as compared to BlockFL but is higher than the non-private. The increase in the value of loss for private BPFISH as compared to the non-private framework is small and has little effect on the value of the loss.

Table \ref{table:acc} shows the best test accuracy obtained when different noise levels are applied. 
As observed from the table, the proposed BPFISH achieved accuracy comparable to that of non-private collaborative learning framework when we added small noise. 
However, large noise affects test accuracy severely. 
Therefore, it is important to find an acceptable noise level to achieve strong DP without compromising accuracy.

\begin{figure}[!t]
 	\centering
 	\begin{subfigure}[b]{0.40\textwidth}
 		\centering
 		\includegraphics[width=\textwidth]{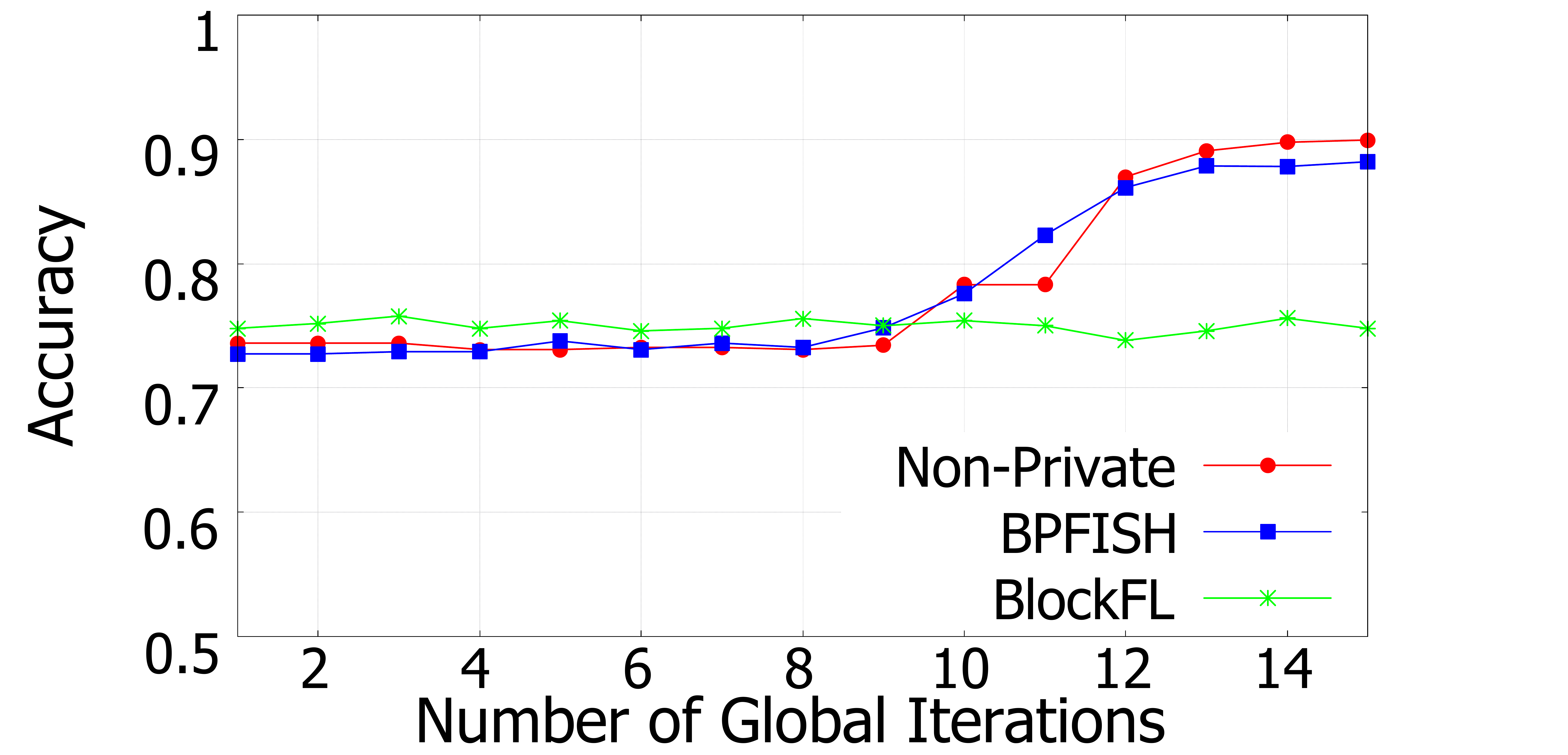}
 		\caption{Test accuracy comparison.}
 		\label{compacc}
 	\end{subfigure}
 	\begin{subfigure}[b]{0.40\textwidth}
 		\centering
 		\includegraphics[width=\textwidth]{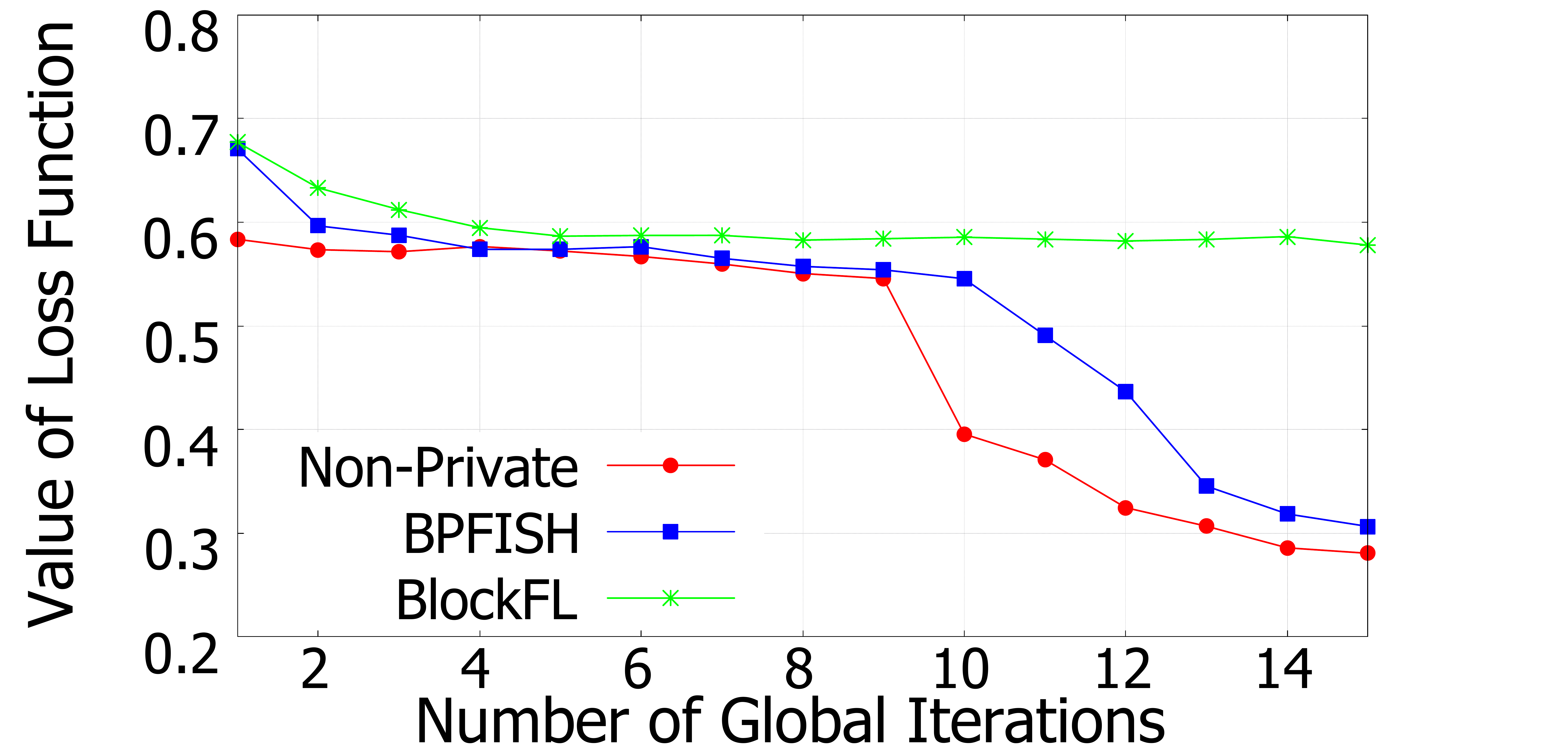}
 		\caption{Value of loss function comparison.}
 		\label{comploss}
 	\end{subfigure}
  	\caption{Test accuracy and loss analysis.}
  	\label{compp}
\end{figure}
\begin{figure}[t]
    \centering
    \includegraphics[width=0.83\linewidth]{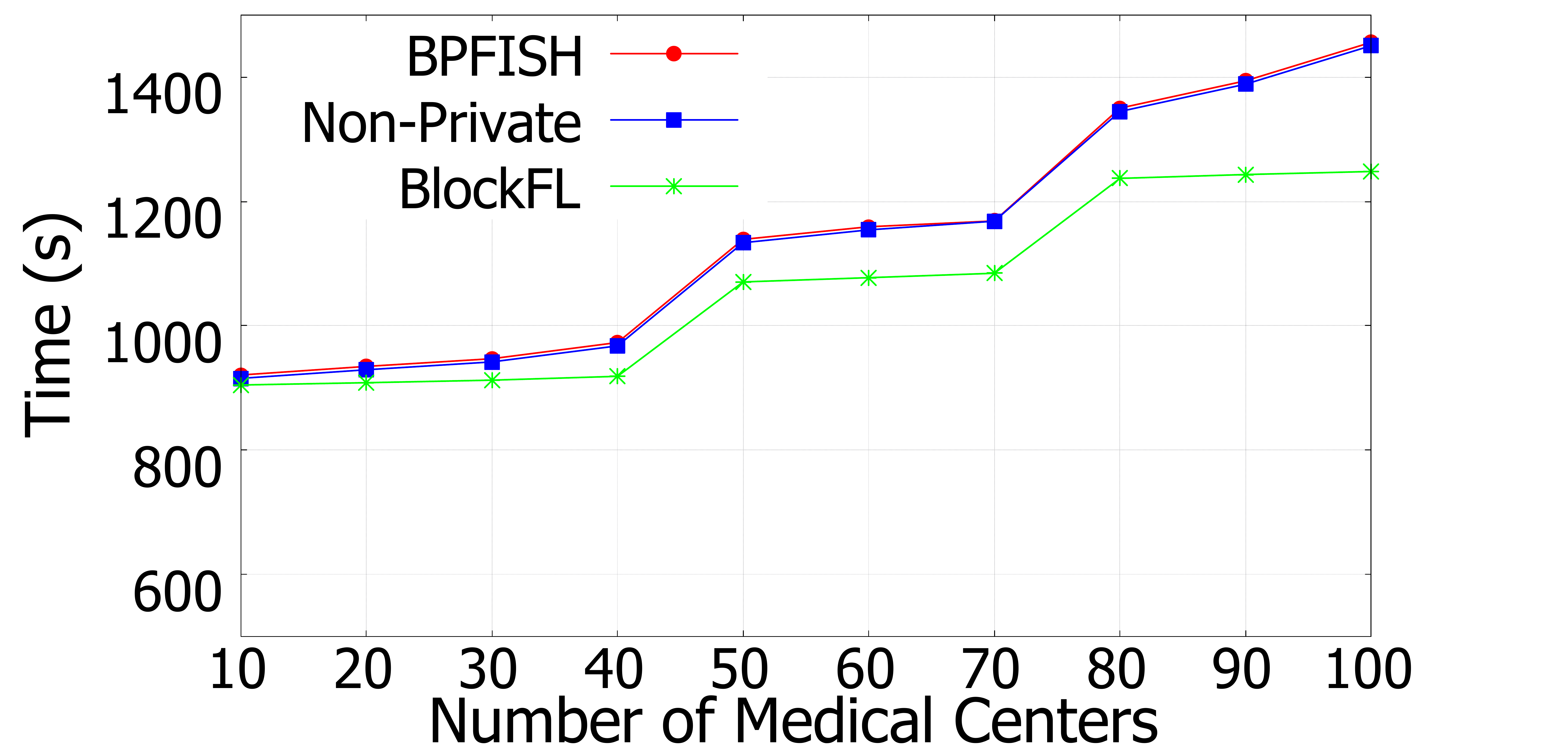}
    \caption{\label{timerand}Analysis of computation time.}
\end{figure}

\begin{figure*}[ht]
\centering
    \minipage{0.25\textwidth}
      \includegraphics[width=\textwidth]{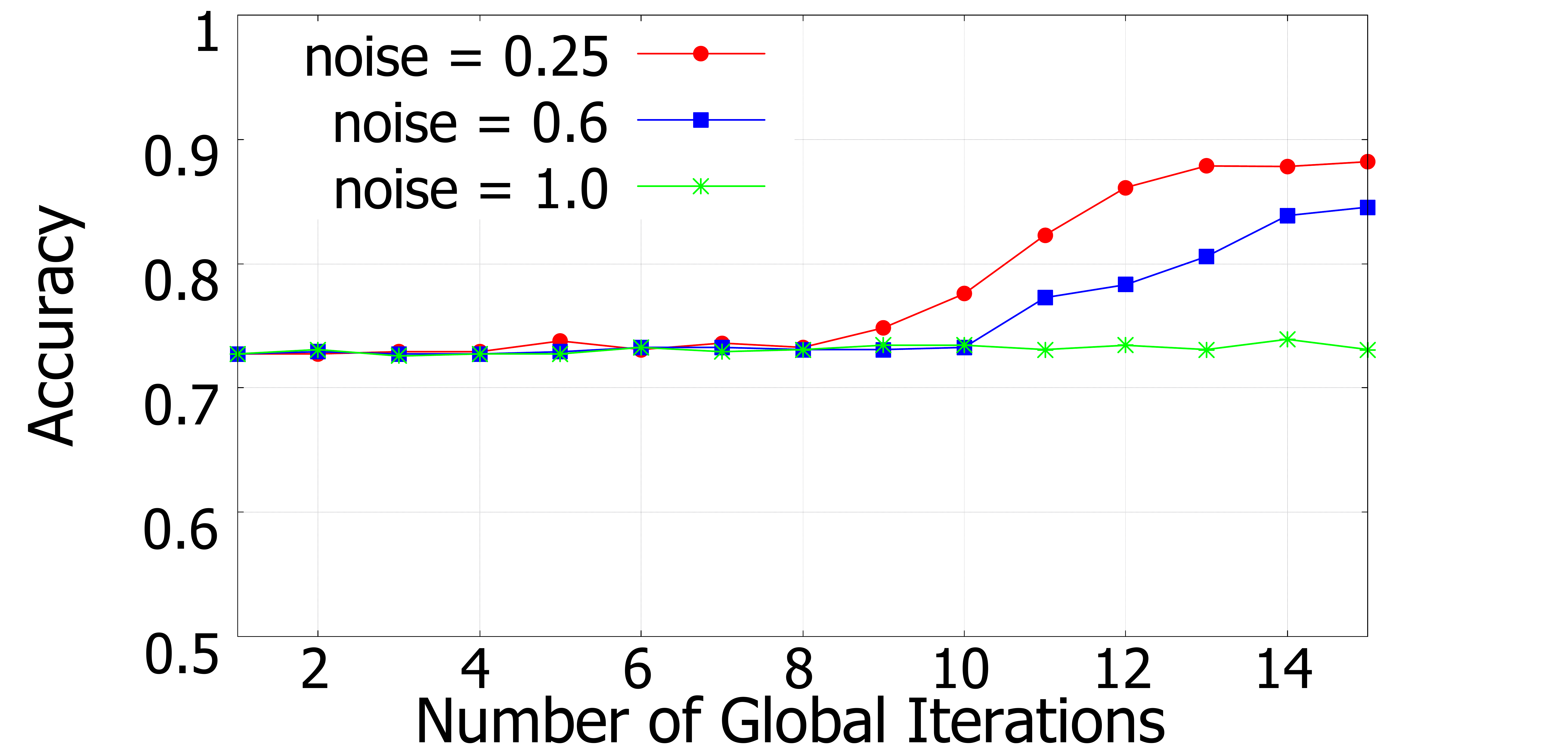}
      \subcaption{}
      \label{noiseacc}
    \endminipage\hfill
    \minipage{0.25\textwidth}
      \includegraphics[width=\textwidth]{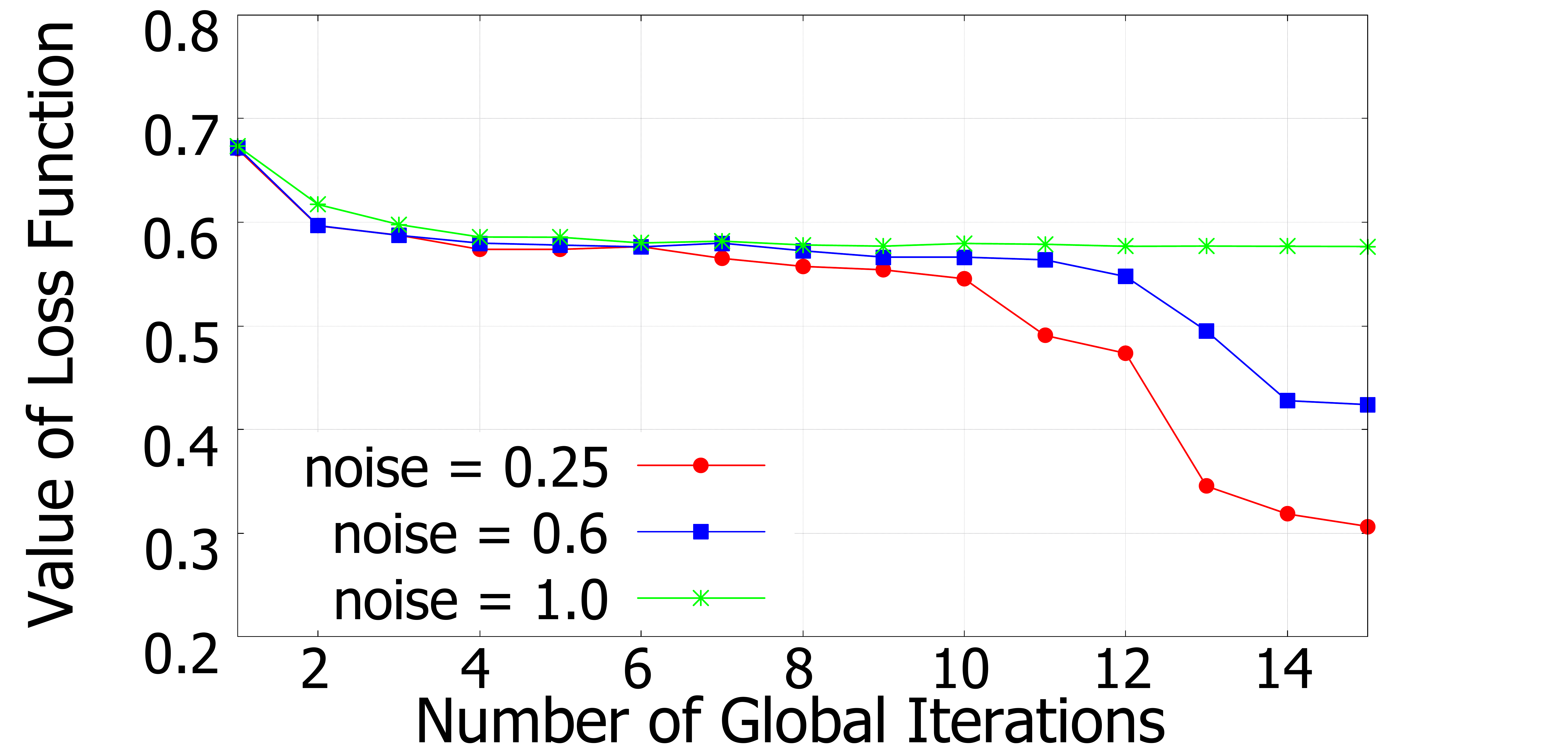}
      \subcaption{}
      \label{noiseloss}
    \endminipage\hfill
    \minipage{0.25\textwidth}%
        \includegraphics[width=\textwidth]{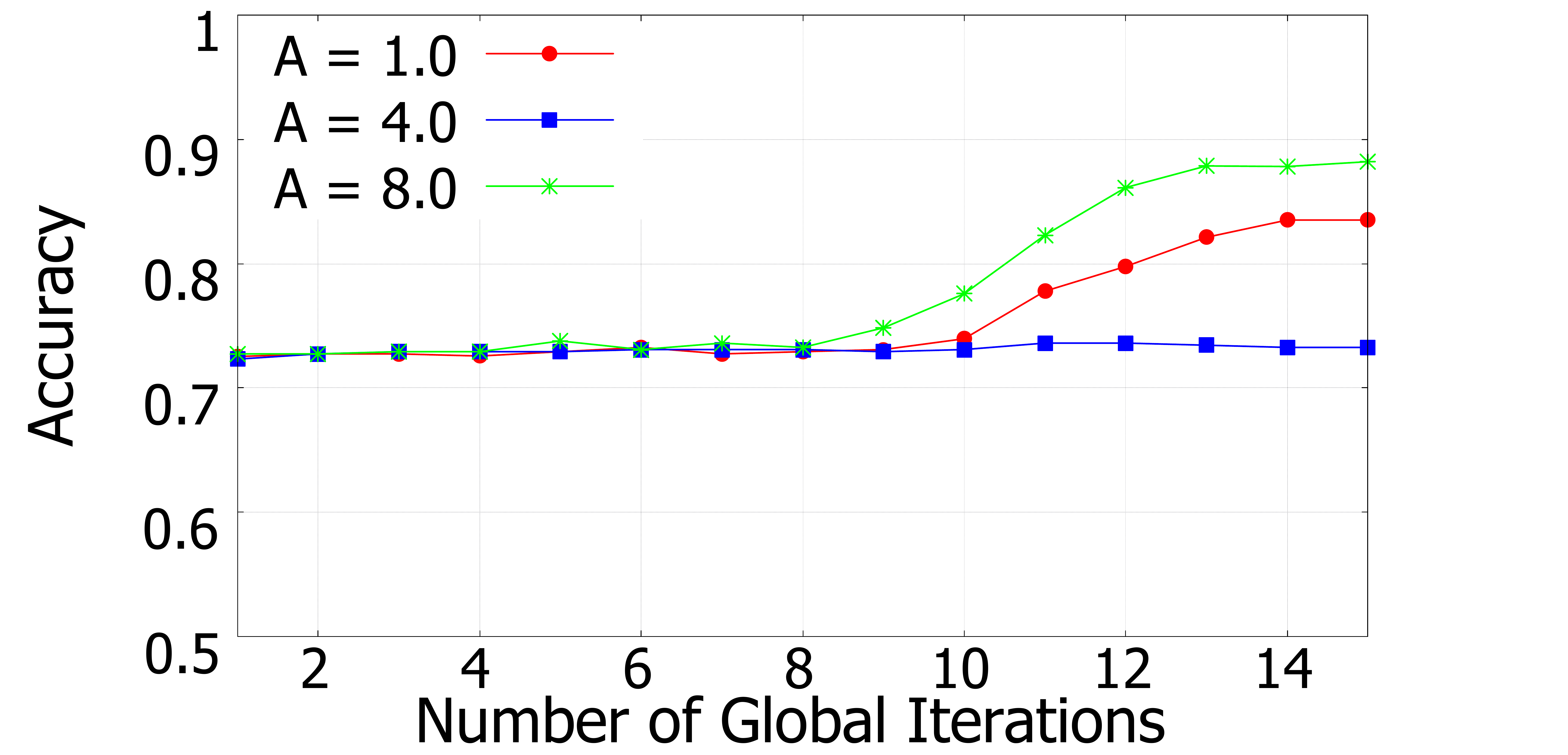}
      \subcaption{}
      \label{clipacc}
    \endminipage
    \minipage{0.25\textwidth}%
        \includegraphics[width=\textwidth]{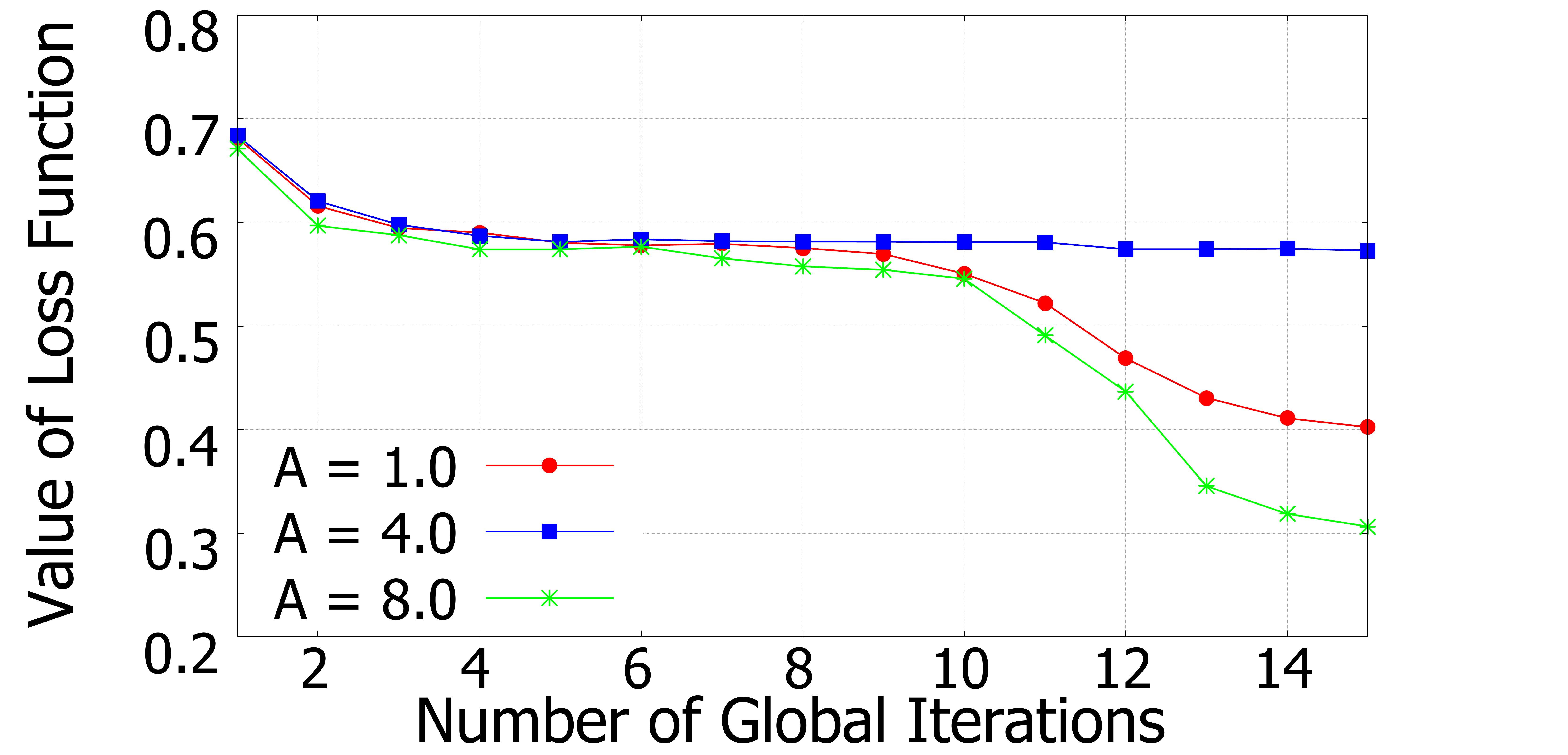}
      \subcaption{}
      \label{cliploss}
    \endminipage
  \caption{a) Effect of noise levels on test accuracy, b) Effect of noise levels on value of loss function, c) Effect of gradient bound on test accuracy, d) Effect of gradient bound on value of loss function.}
  \label{accloss}
\end{figure*}

\subsection{Computation Time Analysis}

Fig. \ref{timerand} shows computational time complexity analysis of the proposed BPFISH in comparison with BlockFL. For comparing computational time complexity, we set the number of MCs from 10 to 100 and the number of miners to 5.
As we can conclude from the figure, the time taken by the BPFISH is increasing with the increased number of MCs. The reason is that the number of local models added to the blockchain increases as the number of MCs increases.
The time taken of BPFISH is higher as compared to BlockFL because of the association formed by the MMA algorithm and the addition of noise in gradient calculation during local model training to ensure DP.
However, the time taken of BPFISH is almost equal to the time taken of non-private because the addition of noise takes less time.

\begin{table}[t]
    \begin{center}
        \caption{Accuracy for different privacy levels} \label{table:acc}
        \begin{tabular}{| c | c | c | c | c | c | c |}
        \hline
        \textbf{{$(\epsilon, \delta)$}} &
        \textbf{{Noise level}} &
        \textbf{{Accuracy}} \\
    
        \hline
            \textbf{{None}} & $\sigma = 0.00$ & $0.8993$ \\ 
            \hline
            {$(185, 10^{-5})$} & $\sigma = 0.25$ & $0.8819$ \\ 
            \hline
            {$(8, 10^{-5})$} & $\sigma = 0.60$ & $0.8454$  \\ 
            \hline
            {$(1.89, 10^{-5})$} & $\sigma = 1.00$ & $0.7309$  \\ 
            \hline
        \end{tabular}
    \end{center}
\end{table}

\subsection{Effects of Different Privacy Parameters}

In Fig. \ref{accloss}, we set $N = 10$, $S = 5$, $T = 15$, $I_n = 10$ and $\delta = 10^{-5}$.
We take various noise levels such as $\sigma$ = 0.25, $\sigma$ = 0.6 and $\sigma$ = 1.0 to illustrate the test accuracy and the value of loss function using gradient bound $A$ = 8 as shown in Figs. \ref{noiseacc} and \ref{noiseloss}.
We plot the test accuracy on various noise levels as a function of global iteration as shown in Fig \ref{noiseacc}. 
As seen from the figure, the accuracy increases as the noise level decreases and attained the highest accuracy when the noise level is set to $\sigma$ = 0.25.
Fig. \ref{noiseloss} compares the value of test set loss on various noise levels as a function of global iteration. We can see from the figure that the value of the loss function decreases as the noise level decreases and attained the best value when the noise level is set to $\sigma$  = 0.25.

In Figs. \ref{clipacc} and \ref{cliploss}, we set different gradient bounds such as $A$ = 1, $A$ = 4 and $A$ = 8 to illustrate the results of the test accuracy and the value of loss function using noise level $\sigma = 0.25$. Limiting the gradient bound destroys the true gradient value. Gradient bound destroys the true direction of gradient estimate if gradient bound is too small whereas a large gradient bound does not destroy true gradient. Therefore the clipped gradient becomes closer to true gradient estimates. 
As we can see from Fig. \ref{clipacc}, the test accuracy increases when the gradient bound increases from 1 to 8 and attained the best accuracy value when $A$ = 8.
Fig. \ref{cliploss} shows that the value of the test set loss function decreases when gradient bound increases from 1 to 8 and convergence performance of the proposed BPFISH obtained the best value when $A$ = 8.
We cannot compare BlockFL with BPFISH on various privacy parameters because BlockFL did not consider privacy parameters in their model.

\section{Conclusion} \label{conc}

In this paper, we have focused on decentralized privacy-preserving FL framework for learning effective models on healthcare data stored at different MCs.
We have proposed a joint optimization problem as maximization of utility and minimization of FL loss function altogether in smart healthcare domain. 
We introduced a stable association algorithm to maximize the utility of miners and MCs in polynomial time complexity. 
Moreover, we leveraged blockchain technology to enable tempered resistant and decentralized local model weights sharing. 
Through simulation analysis using Chest X-Ray Images (Pneumonia) dataset, we have
verified the effectiveness of the proposed BPFISH framework on various privacy parameters.
The simulation results demonstrated that BPFISH achieves high accuracy under a strong privacy protection level. 
Performance study shows that our proposed BPFISH framework outperforms state-of-the-art methods, achieving 11.18\% better results on an average.

In the future, we would like to consider other privacy-preserving techniques such as secure multi-party computation or homomorphic encryption. We would also like to work on energy efficient blockchain-based FL framework to reduce energy consumption during the FL process.
In addition, we will consider state-of-the-art neural network architectures and different data augmentation techniques to observe the effects of privacy parameters on accuracy.

\noindent \textbf{Acknowledgments:} This work is supported by the Science and Engineering Research Board (SERB), Government of India under Grant SRG/2020/000318.

\ifCLASSOPTIONcaptionsoff
  \newpage
\fi

\typeout{}
\bibliographystyle{IEEEtran}

\begin{thebibliography}{10}

\bibitem{li2017collaborative}
X.~Li, X.~Wang, K.~Li, Z.~Han, and V.~C. Leung, ``Collaborative multi-tier
  caching in heterogeneous networks: Modeling, analysis, and design,'' {\em
  IEEE Transactions on Wireless Communications}, 2017.

\bibitem{wang2015backhauling}
N.~Wang, E.~Hossain, and V.~K. Bhargava, ``Backhauling 5g small cells: A radio
  resource management perspective,'' {\em IEEE Wireless Communications},
  vol.~22, no.~5, pp.~41--49, 2015.

\bibitem{hasan2011green}
Z.~Hasan, H.~Boostanimehr, and V.~K. Bhargava, ``Green cellular networks: A
  survey, some research issues and challenges,'' {\em IEEE Communications
  surveys \& tutorials}, vol.~13, no.~4, pp.~524--540, 2011.

\bibitem{kim2010performance}
Y.~Kim, S.~Lee, and D.~Hong, ``Performance analysis of two-tier femtocell
  networks with outage constraints,'' {\em IEEE Transactions on Wireless
  Communications}, vol.~9, no.~9, pp.~2695--2700, 2010.

\bibitem{garcia2009autonomous}
L.~G. Garcia, K.~I. Pedersen, and P.~E. Mogensen, ``Autonomous component
  carrier selection: interference management in local area environments for
  lte-advanced,'' {\em IEEE Communications Magazine}, vol.~47, no.~9,
  pp.~110--116, 2009.

\bibitem{liang2012resource}
Y.-S. Liang, W.-H. Chung, G.-K. Ni, Y.~Chen, H.~Zhang, and S.-Y. Kuo,
  ``Resource allocation with interference avoidance in ofdma femtocell
  networks,'' {\em IEEE Transactions on Vehicular Technology}, vol.~61, no.~5,
  pp.~2243--2255, 2012.

\bibitem{pateromichelakis2013dynamic}
E.~Pateromichelakis, M.~Shariat, A.~Quddus, M.~Dianati, and R.~Tafazolli,
  ``Dynamic clustering framework for multi-cell scheduling in dense small cell
  networks,'' {\em IEEE Communications Letters}, vol.~17, no.~9,
  pp.~1802--1805, 2013.

\bibitem{hatoum2011fcra}
A.~Hatoum, N.~Aitsaadi, R.~Langar, R.~Boutaba, and G.~Pujolle, ``Fcra:
  Femtocell cluster-based resource allocation scheme for ofdma networks,'' in
  {\em 2011 IEEE International Conference on Communications (ICC)}, pp.~1--6,
  IEEE, 2011.

\bibitem{hatoum2014cluster}
A.~Hatoum, R.~Langar, N.~Aitsaadi, R.~Boutaba, and G.~Pujolle, ``Cluster-based
  resource management in ofdma femtocell networks with qos guarantees,'' {\em
  IEEE Transactions on Vehicular Technology}, vol.~63, no.~5, pp.~2378--2391,
  2014.

\bibitem{zhang2013coloring}
Q.~Zhang, X.~Zhu, L.~Wu, and K.~Sandrasegaran, ``A coloring-based resource
  allocation for ofdma femtocell networks,'' in {\em 2013 IEEE Wireless
  Communications and Networking Conference (WCNC)}, pp.~673--678, IEEE, 2013.

\bibitem{huang2016interference}
G.~Huang and J.~Li, ``Interference mitigation for femtocell networks via
  adaptive frequency reuse,'' {\em IEEE Transactions on Vehicular Technology},
  vol.~65, no.~4, pp.~2413--2423, 2016.

\bibitem{uygungelen2011graph}
S.~Uygungelen, G.~Auer, and Z.~Bharucha, ``Graph-based dynamic frequency reuse
  in femtocell networks,'' in {\em Vehicular Technology Conference (VTC
  Spring), 2011 IEEE 73rd}, pp.~1--6, IEEE, 2011.

\bibitem{chang2009multicell}
R.~Y. Chang, Z.~Tao, J.~Zhang, and C.-C.~J. Kuo, ``Multicell ofdma downlink
  resource allocation using a graphic framework,'' {\em IEEE Transactions on
  Vehicular Technology}, vol.~58, no.~7, pp.~3494--3507, 2009.

\bibitem{pateromichelakis2014graph}
E.~Pateromichelakis, M.~Shariat, A.~U. Quddus, and R.~Tafazolli, ``Graph-based
  multicell scheduling in ofdma-based small cell networks,'' {\em IEEE Access},
  vol.~2, pp.~897--908, 2014.

\bibitem{cao2005distributed}
L.~Cao and H.~Zheng, ``Distributed spectrum allocation via local bargaining.,''
  in {\em SECON}, pp.~475--486, 2005.

\bibitem{wang2011distributed}
Y.~Wang, K.~Zheng, X.~Shen, and W.~Wang, ``A distributed resource allocation
  scheme in femtocell networks,'' in {\em Vehicular Technology Conference (VTC
  Spring), 2011 IEEE 73rd}, pp.~1--5, IEEE, 2011.

\bibitem{anand2012maximum}
S.~Anand, S.~Sengupta, and R.~Chandramouli, ``Maximum spectrum packing: a
  distributed opportunistic channel acquisition mechanism in dynamic spectrum
  access networks,'' {\em IET communications}, vol.~6, no.~8, pp.~872--882,
  2012.

\bibitem{sundaresan2009efficient}
K.~Sundaresan and S.~Rangarajan, ``Efficient resource management in ofdma femto
  cells,'' in {\em Proceedings of the tenth ACM international symposium on
  Mobile ad hoc networking and computing}, pp.~33--42, ACM, 2009.

\bibitem{lu2013achieving}
Z.~Lu, T.~Bansal, and P.~Sinha, ``Achieving user-level fairness in open-access
  femtocell-based architecture,'' {\em IEEE transactions on mobile computing},
  vol.~12, no.~10, pp.~1943--1954, 2013.

\bibitem{yang2017spectrum}
Y.~Yang, B.~Bai, and W.~Chen, ``Spectrum reuse ratio in 5g cellular networks: A
  matrix graph approach,'' {\em IEEE Transactions on Mobile Computing}, 2017.

\bibitem{wubben2014benefits}
D.~Wubben, P.~Rost, J.~S. Bartelt, M.~Lalam, V.~Savin, M.~Gorgoglione,
  A.~Dekorsy, and G.~Fettweis, ``Benefits and impact of cloud computing on 5g
  signal processing: Flexible centralization through cloud-ran,'' {\em IEEE
  signal processing magazine}, vol.~31, no.~6, pp.~35--44, 2014.

\bibitem{checko2015cloud}
A.~Checko, H.~L. Christiansen, Y.~Yan, L.~Scolari, G.~Kardaras, M.~S. Berger,
  and L.~Dittmann, ``Cloud ran for mobile networks—a technology overview,''
  {\em IEEE Communications surveys \& tutorials}, vol.~17, no.~1, pp.~405--426,
  2015.

\bibitem{A}
 3GPP standardization, ``Evolved Universal Terrestrial Radio Access (E-UTRA)
  and Evolved Universal Terrestrial Radio Access Network (E-UTRAN); Overall
  Description; Stage 2,'' TS 36.300 V10.5.0, 2011.

\bibitem{garey2002computers}
M.~R. Garey and D.~S. Johnson, {\em Computers and intractability}, vol.~29.
\newblock wh freeman New York, 2002.

\bibitem{ahmed2008online}
N.~Ahmed, U.~Ismail, S.~Keshav, and K.~Papagiannaki, ``Online estimation of rf
  interference,'' in {\em Proceedings of the 2008 ACM CoNEXT Conference}, p.~4,
  ACM, 2008.

\bibitem{kuhn2009local}
F.~Kuhn, ``Local multicoloring algorithms: Computing a nearly-optimal tdma
  schedule in constant time,'' {\em arXiv preprint arXiv:0902.1868}, 2009.

\bibitem{kuhn2006complexity}
F.~Kuhn and R.~Wattenhofer, ``On the complexity of distributed graph
  coloring,'' in {\em Proceedings of the twenty-fifth annual ACM symposium on
  Principles of distributed computing}, pp.~7--15, ACM, 2006.

\bibitem{A2}
 Evolved Universal Terrestrial Radio Access (E-UTRA); Physical layer procedures
  (version 10.4.0), 3GPP Std. TS 36.213, Dec. 2011.

\bibitem{baker2011uplink}
M.~Baker, ``Uplink transmission procedures,'' {\em LTE-The UMTS Long Term
  Evolution: From Theory to Practice, Second Edition}, pp.~407--420, 2011.

\bibitem{jain1984quantitative}
R.~Jain, D.-M. Chiu, and W.~R. Hawe, {\em A quantitative measure of fairness
  and discrimination for resource allocation in shared computer system},
  vol.~38.
\newblock Eastern Research Laboratory, Digital Equipment Corporation Hudson,
  MA, 1984.

\end{thebibliography}


\begin{thebibliography}{10}
\providecommand{\url}[1]{#1}
\csname url@samestyle\endcsname
\providecommand{\newblock}{\relax}
\providecommand{\bibinfo}[2]{#2}
\providecommand{\BIBentrySTDinterwordspacing}{\spaceskip=0pt\relax}
\providecommand{\BIBentryALTinterwordstretchfactor}{4}
\providecommand{\BIBentryALTinterwordspacing}{\spaceskip=\fontdimen2\font plus
\BIBentryALTinterwordstretchfactor\fontdimen3\font minus
  \fontdimen4\font\relax}
\providecommand{\BIBforeignlanguage}[2]{{%
\expandafter\ifx\csname l@#1\endcsname\relax
\typeout{** WARNING: IEEEtran.bst: No hyphenation pattern has been}%
\typeout{** loaded for the language `#1'. Using the pattern for}%
\typeout{** the default language instead.}%
\else
\language=\csname l@#1\endcsname
\fi
#2}}
\providecommand{\BIBdecl}{\relax}
\BIBdecl

\bibitem{singh2020internet}
R.~P. Singh, M.~Javaid, A.~Haleem, and R.~Suman, ``Internet of things ({IoT})
  applications to fight against covid-19 pandemic,'' \emph{Diabetes \&
  Metabolic Syndrome: Clinical Research \& Reviews}, vol.~14, no.~4, pp.
  521--524, 2020.

\bibitem{jnr2020use}
B.~A. Jnr, ``Use of telemedicine and virtual care for remote treatment in
  response to covid-19 pandemic,'' \emph{Journal of Medical Systems}, vol.~44,
  no.~7, pp. 1--9, 2020.

\bibitem{10.1145/3298981}
\BIBentryALTinterwordspacing
Q.~Yang, Y.~Liu, T.~Chen, and Y.~Tong, ``Federated machine learning: Concept
  and applications,'' \emph{ACM Trans. Intell. Syst. Technol.}, vol.~10, no.~2,
  jan 2019. [Online]. Available: \url{https://doi.org/10.1145/3298981}
\BIBentrySTDinterwordspacing

\bibitem{mcmahan2017communication}
B.~McMahan, E.~Moore, D.~Ramage, S.~Hampson, and B.~A. y~Arcas,
  ``Communication-efficient learning of deep networks from decentralized
  data,'' in \emph{Artificial intelligence and statistics}.\hskip 1em plus
  0.5em minus 0.4em\relax PMLR, 2017, pp. 1273--1282.

\bibitem{KimberlyPowellBlog}
\BIBentryALTinterwordspacing
K.~Powell. {NVIDIA} clara federated learning to deliver {AI} to hospitals while
  protecting patient data: Intelligent edge computing platform streamlines deep
  learning for radiology. [Online]. Available:
  \url{https://blogs.nvidia.com/blog/2019/12/01/clara-federated-learning/}
\BIBentrySTDinterwordspacing

\bibitem{8931716}
A.~I. Newaz, A.~K. Sikder, M.~A. Rahman, and A.~S. Uluagac, ``Healthguard: A
  machine learning-based security framework for smart healthcare systems,'' in
  \emph{2019 Sixth International Conference on Social Networks Analysis,
  Management and Security (SNAMS)}, 2019, pp. 389--396.

\bibitem{li2020federated}
T.~Li, A.~K. Sahu, A.~Talwalkar, and V.~Smith, ``Federated learning:
  Challenges, methods, and future directions,'' \emph{IEEE Signal Processing
  Magazine}, vol.~37, no.~3, pp. 50--60, 2020.

\bibitem{8733825}
H.~Kim, J.~Park, M.~Bennis, and S.-L. Kim, ``Blockchained on-device federated
  learning,'' \emph{IEEE Communications Letters}, vol.~24, no.~6, pp.
  1279--1283, 2020.

\bibitem{10.1007/11681878_14}
C.~Dwork, F.~McSherry, K.~Nissim, and A.~Smith, ``Calibrating noise to
  sensitivity in private data analysis,'' in \emph{Theory of Cryptography},
  S.~Halevi and T.~Rabin, Eds.\hskip 1em plus 0.5em minus 0.4em\relax Berlin,
  Heidelberg: Springer Berlin Heidelberg, 2006, pp. 265--284.

\bibitem{Abadi_2016}
\BIBentryALTinterwordspacing
M.~Abadi, A.~Chu, I.~Goodfellow, H.~B. McMahan, I.~Mironov, K.~Talwar, and
  L.~Zhang, ``Deep learning with differential privacy,'' \emph{Proceedings of
  the 2016 ACM SIGSAC Conference on Computer and Communications Security}, Oct
  2016. [Online]. Available: \url{http://dx.doi.org/10.1145/2976749.2978318}
\BIBentrySTDinterwordspacing

\bibitem{wang2019collecting}
N.~Wang, X.~Xiao, Y.~Yang, J.~Zhao, S.~C. Hui, H.~Shin, J.~Shin, and G.~Yu,
  ``Collecting and analyzing multidimensional data with local differential
  privacy,'' in \emph{2019 IEEE 35th International Conference on Data
  Engineering (ICDE)}.\hskip 1em plus 0.5em minus 0.4em\relax IEEE, 2019, pp.
  638--649.

\bibitem{tuli2020healthfog}
S.~Tuli, N.~Basumatary, S.~S. Gill, M.~Kahani, R.~C. Arya, G.~S. Wander, and
  R.~Buyya, ``Healthfog: An ensemble deep learning based smart healthcare
  system for automatic diagnosis of heart diseases in integrated {IoT} and fog
  computing environments,'' \emph{Future Generation Computer Systems}, vol.
  104, pp. 187--200, 2020.

\bibitem{chen2020fedhealth}
Y.~Chen, X.~Qin, J.~Wang, C.~Yu, and W.~Gao, ``Fedhealth: A federated transfer
  learning framework for wearable healthcare,'' \emph{IEEE Intelligent
  Systems}, vol.~35, no.~4, pp. 83--93, 2020.

\bibitem{silva2019federated}
S.~Silva, B.~A. Gutman, E.~Romero, P.~M. Thompson, A.~Altmann, and M.~Lorenzi,
  ``Federated learning in distributed medical databases: Meta-analysis of
  large-scale subcortical brain data,'' in \emph{2019 IEEE 16th international
  symposium on biomedical imaging (ISBI 2019)}.\hskip 1em plus 0.5em minus
  0.4em\relax IEEE, 2019, pp. 270--274.

\bibitem{sadilek2021privacy}
A.~Sadilek, L.~Liu, D.~Nguyen, M.~Kamruzzaman, S.~Serghiou, B.~Rader,
  A.~Ingerman, S.~Mellem, P.~Kairouz, E.~O. Nsoesie \emph{et~al.},
  ``Privacy-first health research with federated learning,'' \emph{NPJ digital
  medicine}, vol.~4, no.~1, pp. 1--8, 2021.

\bibitem{9492000}
J.~Li, Y.~Meng, L.~Ma, S.~Du, H.~Zhu, Q.~Pei, and S.~Shen, ``A federated
  learning based privacy-preserving smart healthcare system,'' \emph{IEEE
  Transactions on Industrial Informatics}, pp. 1--1, 2021.

\bibitem{LI2020101765}
\BIBentryALTinterwordspacing
X.~Li, Y.~Gu, N.~Dvornek, L.~H. Staib, P.~Ventola, and J.~S. Duncan,
  ``Multi-site fmri analysis using privacy-preserving federated learning and
  domain adaptation: Abide results,'' \emph{Medical Image Analysis}, vol.~65,
  p. 101765, 2020. [Online]. Available:
  \url{https://www.sciencedirect.com/science/article/pii/S1361841520301298}
\BIBentrySTDinterwordspacing

\bibitem{9069945}
K.~Wei, J.~Li, M.~Ding, C.~Ma, H.~H. Yang, F.~Farokhi, S.~Jin, T.~Q.~S. Quek,
  and H.~V. Poor, ``Federated learning with differential privacy: Algorithms
  and performance analysis,'' \emph{IEEE Transactions on Information Forensics
  and Security}, vol.~15, pp. 3454--3469, 2020.

\bibitem{novo2018blockchain}
O.~Novo, ``Blockchain meets {IoT}: An architecture for scalable access
  management in {IoT},'' \emph{IEEE Internet of Things Journal}, vol.~5, no.~2,
  pp. 1184--1195, 2018.

\bibitem{zhang2018blockchain}
X.~Zhang and S.~Poslad, ``Blockchain support for flexible queries with granular
  access control to electronic medical records (emr),'' in \emph{2018 IEEE
  International conference on communications (ICC)}.\hskip 1em plus 0.5em minus
  0.4em\relax IEEE, 2018, pp. 1--6.

\bibitem{dagher2018ancile}
G.~G. Dagher, J.~Mohler, M.~Milojkovic, and P.~B. Marella, ``Ancile:
  Privacy-preserving framework for access control and interoperability of
  electronic health records using blockchain technology,'' \emph{Sustainable
  cities and society}, vol.~39, pp. 283--297, 2018.

\bibitem{9261429}
C.~Li, M.~Dong, J.~Li, G.~Xu, X.~Chen, and K.~Ota, ``Healthchain: Secure emrs
  management and trading in distributed healthcare service system,'' \emph{IEEE
  Internet of Things Journal}, vol.~8, no.~9, pp. 7192--7202, 2021.

\bibitem{9112693}
S.~Seng, C.~Luo, X.~Li, H.~Zhang, and H.~Ji, ``User matching on blockchain for
  computation offloading in ultra-dense wireless networks,'' \emph{IEEE
  Transactions on Network Science and Engineering}, vol.~8, no.~2, pp.
  1167--1177, 2021.

\bibitem{lim2021towards}
W.~Y.~B. Lim, J.~Huang, Z.~Xiong, J.~Kang, D.~Niyato, X.-S. Hua, C.~Leung, and
  C.~Miao, ``Towards federated learning in uav-enabled internet of vehicles: A
  multi-dimensional contract-matching approach,'' \emph{IEEE Transactions on
  Intelligent Transportation Systems}, 2021.

\bibitem{KADADHA2021103155}
\BIBentryALTinterwordspacing
M.~Kadadha, H.~Otrok, S.~Singh, R.~Mizouni, and A.~Ouali, ``Two-sided
  preferences task matching mechanisms for blockchain-based crowdsourcing,''
  \emph{Journal of Network and Computer Applications}, vol. 191, p. 103155,
  2021. [Online]. Available:
  \url{https://www.sciencedirect.com/science/article/pii/S1084804521001697}
\BIBentrySTDinterwordspacing

\bibitem{8726129}
J.~An, H.~Yang, X.~Gui, W.~Zhang, R.~Gui, and J.~Kang, ``{TCNS}: Node selection
  with privacy protection in crowdsensing based on twice consensuses of
  blockchain,'' \emph{IEEE Transactions on Network and Service Management},
  vol.~16, no.~3, pp. 1255--1267, 2019.

\bibitem{9176324}
R.~Chen, M.~Chen, and H.~Yang, ``Dynamic physician-patient matching in the
  healthcare system,'' in \emph{2020 42nd Annual International Conference of
  the IEEE Engineering in Medicine Biology Society (EMBC)}, 2020, pp.
  5868--5871.

\bibitem{lu2019blockchain}
Y.~Lu, X.~Huang, Y.~Dai, S.~Maharjan, and Y.~Zhang, ``Blockchain and federated
  learning for privacy-preserved data sharing in industrial {IoT},'' \emph{IEEE
  Transactions on Industrial Informatics}, vol.~16, no.~6, pp. 4177--4186,
  2019.

\bibitem{8874972}
K.~Gai, Y.~Wu, L.~Zhu, Z.~Zhang, and M.~Qiu, ``Differential privacy-based
  blockchain for industrial internet-of-things,'' \emph{IEEE Transactions on
  Industrial Informatics}, vol.~16, no.~6, pp. 4156--4165, 2020.

\bibitem{jia2021blockchain}
B.~Jia, X.~Zhang, J.~Liu, Y.~Zhang, K.~Huang, and Y.~Liang,
  ``Blockchain-enabled federated learning data protection aggregation scheme
  with differential privacy and homomorphic encryption in {IIoT},'' \emph{IEEE
  Transactions on Industrial Informatics}, 2021.

\bibitem{ramanan2020baffle}
P.~Ramanan and K.~Nakayama, ``{BAFFLE}: Blockchain based aggregator free
  federated learning,'' in \emph{2020 IEEE International Conference on
  Blockchain (Blockchain)}.\hskip 1em plus 0.5em minus 0.4em\relax IEEE, 2020,
  pp. 72--81.

\bibitem{article}
S.~Nakamoto, ``Bitcoin: A peer-to-peer electronic cash system,''
  \emph{Cryptography Mailing list at https://metzdowd.com}, 03 2009.

\bibitem{9264742}
Z.~Yang, M.~Chen, W.~Saad, C.~S. Hong, and M.~Shikh-Bahaei, ``Energy efficient
  federated learning over wireless communication networks,'' \emph{IEEE
  Transactions on Wireless Communications}, vol.~20, no.~3, pp. 1935--1949,
  2021.

\bibitem{9371426}
A.~Pratap and S.~K. Das, ``Stable matching based resource allocation for
  service provider's revenue maximization in {5G} networks,'' \emph{IEEE
  Transactions on Mobile Computing}, pp. 1--1, 2021.

\bibitem{kang2019incentive}
J.~Kang, Z.~Xiong, D.~Niyato, S.~Xie, and J.~Zhang, ``Incentive mechanism for
  reliable federated learning: A joint optimization approach to combining
  reputation and contract theory,'' \emph{IEEE Internet of Things Journal},
  vol.~6, no.~6, pp. 10\,700--10\,714, 2019.

\bibitem{10.1145/3410566.3410598}
\BIBentryALTinterwordspacing
C.~K. Leung, D.~L.~X. Fung, S.~B. Mushtaq, O.~T. Leduchowski, R.~L. Bouchard,
  H.~Jin, A.~Cuzzocrea, and C.~Y. Zhang, ``Data science for healthcare
  predictive analytics,'' in \emph{Proceedings of the 24th Symposium on
  International Database Engineering 'I\&' Applications}, ser. IDEAS '20.\hskip
  1em plus 0.5em minus 0.4em\relax New York, NY, USA: Association for Computing
  Machinery, 2020. [Online]. Available:
  \url{https://doi.org/10.1145/3410566.3410598}
\BIBentrySTDinterwordspacing

\bibitem{reisizadeh2020fedpaq}
A.~Reisizadeh, A.~Mokhtari, H.~Hassani, A.~Jadbabaie, and R.~Pedarsani,
  ``Fedpaq: A communication-efficient federated learning method with periodic
  averaging and quantization,'' in \emph{International Conference on Artificial
  Intelligence and Statistics}.\hskip 1em plus 0.5em minus 0.4em\relax PMLR,
  2020, pp. 2021--2031.

\bibitem{gale1962college}
D.~Gale and L.~S. Shapley, ``College admissions and the stability of
  marriage,'' \emph{The American Mathematical Monthly}, vol.~69, no.~1, pp.
  9--15, 1962.

\bibitem{yang2020energy}
Z.~Yang, M.~Chen, W.~Saad, C.~S. Hong, and M.~Shikh-Bahaei, ``Energy efficient
  federated learning over wireless communication networks,'' \emph{IEEE
  Transactions on Wireless Communications}, vol.~20, no.~3, pp. 1935--1949,
  2020.

\bibitem{singh2019detailed}
A.~Singh, P.~Vepakomma, O.~Gupta, and R.~Raskar, ``Detailed comparison of
  communication efficiency of split learning and federated learning,''
  \emph{arXiv preprint arXiv:1909.09145}, 2019.

\bibitem{PaulMooneydataset}
\BIBentryALTinterwordspacing
P.~Mooney. Chest x-ray images (pneumonia), version 2. [Online]. Available:
  \url{https://www.kaggle.com/paultimothymooney/chest-xray-pneumonia}
\BIBentrySTDinterwordspacing

\bibitem{10.5555/2627435.2670313}
N.~Srivastava, G.~Hinton, A.~Krizhevsky, I.~Sutskever, and R.~Salakhutdinov,
  ``Dropout: A simple way to prevent neural networks from overfitting,''
  \emph{J. Mach. Learn. Res.}, vol.~15, no.~1, p. 1929–1958, jan 2014.

\end{thebibliography}


\begin{IEEEbiography}[{\includegraphics[width=1.1in,height=1.2in,clip,keepaspectratio]{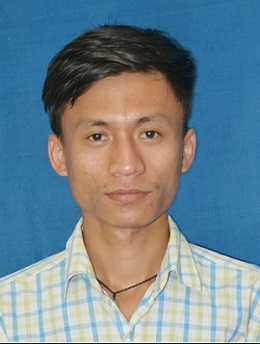}}]{Moirangthem Biken Singh} completed the B.Tech degree in Computer Science and Engineering from the National Institute of Technology Manipur, India, in 2018 and the M.Tech degree from National Institute of Technology Kurukshetra, India, in 2021. He is currently pursuing Ph.D. degree in Computer Science and Engineering, Indian Institute of Technology (BHU) Varanasi, India. His current research interest include AI, machine learning and FL in Smart Healthcare.

\end{IEEEbiography}

\begin{IEEEbiography}[{\includegraphics[width=1.1in,height=1.2in,clip,keepaspectratio]{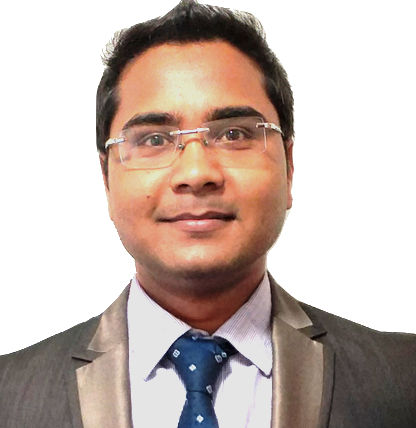}}]{Ajay Pratap} is an Assistant Professor with the Department of Computer Science and Engineering, Indian Institute of Technology (BHU) Varanasi, India. Before joining IIT (BHU), he was associated with the Department of Computer Science and Engineering, National Institute of Technology Karnataka (NITK) Surathkal, India, as an Assistant Professor from December 2019 to May 2020. He worked as a Postdoctoral Researcher in the Department of Computer Science at Missouri University of Science and Technology, USA, from August 2018 to December 2019. He completed his Ph.D. degree in Computer Science and Engineering from the Indian Institute of Technology Patna, India, in July 2018. His research interests include Cyber-Physical Systems, IoT-enabled Smart Environments, Mobile Computing and Networking, Statistical Learning, Algorithm Design for Next-generation Advanced Wireless Networks, Applied Graph Theory, and Game Theory. His current work is related to HetNet, Small Cells, Fog Computing, IoT, and D2D communication underlaying cellular 5G and beyond. His papers appeared in several international journals and conferences including IEEE Transactions on Mobile Computing, IEEE Transactions on Parallel and Distributed Systems, and IEEE LCN, etc. He has received several awards including the Best Paper Candidate Award and NSF travel grant for IEEE Smartcom'19 in the USA. 
\end{IEEEbiography}

\end{document}